\newtheorem{theorem}{Theorem}
\newtheorem{corollary}{Corollary}
\newtheorem{definition}{Definition}
\newtheorem{example}{Example}
\newtheorem{lemma}{Lemma}
\newtheorem{proposition}{Proposition}
\newtheorem{remark}{Remark}
\newenvironment{proof}[1][Proof]{\emph{#1.} }{\  \hfill $\square $ \vspace{5 pt}}
\tikzset{myptr/.style={decoration={markings,mark=at position 1 with %
       {\arrow[scale=2,>=stealth]{>}}},postaction={decorate}}}
\newcommand*\samethanks[1][\value{footnote}]{\footnotemark[#1]}
\DeclareFontFamily{T1}{calligra}{}
\DeclareFontShape{T1}{calligra}{m}{n}{<->s*[1.44]callig15}{}
\DeclareMathAlphabet\mathcalligra   {T1}{calligra} {m} {n}
\newcommand{\pablo}[1]{  \ifthenelse{\boolean{showcomments}}
{\textcolor{green!50!black}{(T: #1)}}{}}
\newcommand{\marcelo}[1]{\ifthenelse{\boolean{showcomments}}
{\textcolor{red}{(M: #1)}}{}}
\newcommand{\agustin}[1]{  \ifthenelse{\boolean{showcomments}}
{\textcolor{blue!50!black}{(T: #1)}}{}}
\begin{document}

\title{Not obviously manipulable allotment rules%
\thanks{%
We thank Alejandro Neme, James Schummer, Fernando Tohmé, and participants of the 23rd annual SAET Conference for their helpful comments. We acknowledge the financial support
from UNSL through grants 032016, 030120, and 030320, from Consejo Nacional
de Investigaciones Cient\'{\i}ficas y T\'{e}cnicas (CONICET) through grant
PIP 112-200801-00655, and from Agencia Nacional de Promoción Cient\'ifica y Tecnológica through grant PICT 2017-2355.}}


\author{R. Pablo Arribillaga\thanks{
Instituto de Matem\'{a}tica Aplicada San Luis (UNSL-CONICET) and Departamento de Matemática, Universidad Nacional de San
Luis, San Luis, Argentina. Emails: \href{mailto:rarribi@unsl.edu.ar}{rarribi@unsl.edu.ar} (R. P. Arribillaga) 
and \href{mailto:abonifacio@unsl.edu.ar}{abonifacio@unsl.edu.ar} (A. G. Bonifacio).
} \and Agustín G. Bonifacio\samethanks[2] 
}

\date{\today}

\maketitle

\begin{abstract}




In the problem of allocating a single non-disposable commodity among agents whose preferences are single-peaked, we study a weakening of strategy-proofness called not obvious manipulability (NOM). If agents are cognitively limited, then NOM is sufficient to describe their strategic behavior. We characterize a large family of own-peak-only rules that satisfy efficiency, NOM, and a minimal fairness condition. We call these rules "simple". 
In economies with excess demand, simple rules fully satiate agents whose peak amount is less than or equal to equal division and assign, to each remaining agent, an amount between equal division and his peak. In economies with excess supply, simple rules are defined symmetrically. These rules can be thought of as a two-step procedure that involves solving a claims problem.
We also show that the single-plateaued domain is maximal for the characterizing properties of simple rules. Therefore, even though replacing strategy-proofness with NOM greatly expands the family of admissible rules, the maximal domain of preferences involved remains basically unaltered.  


\bigskip

\noindent \emph{JEL classification:} D70, D82. \bigskip

\noindent \emph{Keywords:} obvious manipulations, allotment rules, maximal domain, single-peaked preferences, single-plateaued preferences.  

\end{abstract}

\section{Introduction}


Consider the problem of allocating a single non-disposable commodity among a group of agents with single-peaked preferences: up to some critical level, called the peak, an increase in an agent's consumption raises his welfare; beyond that level, the opposite holds. In this context, an allotment rule is a systematic procedure that allows agents to select an allotment, among
many, according to their preferences.  
An allotment rule is strategy-proof if misreporting preferences is never better than truth-telling. In this paper, we study a weakening of strategy-proofness called not obvious manipulability.

Criteria of incentive compatibility, efficiency, and fairness are considered the three perennial goals of mechanism design. When those three criteria are materialized in the properties of strategy-proofness, efficiency, and symmetry, they 
single out the ``uniform'' rule \citep{sprumont1991division, ching1994alternative}; whereas relaxing symmetry by permitting an asymmetric treatment of agents expands the options to the versatile family of ``sequential'' allotment rules \citep{barbera1997strategy}.  Insisting on efficiency and on a minimal fairness condition, and in line with a current strand in the literature, we investigate what happens when strategy-proofness is weakened. The idea underlying this approach is that even though manipulations are pervasive,  agents may not realize they can manipulate a rule because they lack information about others' behavior or they are cognitively limited. 
A misreport that in a specific situation (a specific preference profile of the other agents) is a profitable manipulation may actually be worse than truth-telling in another situation. Therefore, if we consider an agent with limited information on other agents' preferences,  it may be very unclear whether such manipulation will be profitable in practice. Assuming that the agent knows all possible outcomes under any misreport and under truth-telling, there are several different ways to determine when a misreport is recognized or identified as a profitable manipulation. \cite{troyan2020obvious}, in the context of two-sided matching, propose a simple and tractable way to do this by considering best/worst-case scenarios to formalize the notion of obvious manipulation. A manipulation is obvious if it either makes the agent better off than truth-telling in the worst case or makes the agent better off than truth-telling in the best case. An allotment rule is not obviously manipulable (NOM) if it has no obvious manipulation.

 If agents are cognitively limited, then NOM is sufficient to describe their strategic behavior. Therefore, the question arises to what extent NOM rules enrich the landscape of strategy-proof rules. 
 We will focus on own-peak-only rules. This means that the sole information collected by these rules from an agent’s preference to determine his allotment is his peak amount.  Because of their simplicity, own-peak-only rules are important rules in their own right and are both useful in practice and extensively studied in the literature. Furthermore, the own-peak-only property follows from efficiency and strategy-proofness  \citep[see][]{sprumont1991division,ching1994alternative}. However, since we do not impose strategy-proofness,  we explicitly invoke it here.

Various rules have been proposed to address these allocation problems. One rule that stands out as the best-behaved
from several standpoints is the ``uniform rule''  
which seeks to equalize the amounts received by agents while maintaining efficiency. As we already mentioned, this rule is strategy-proof and, therefore, NOM. Additionally, the literature highlights two other important rules: the ``constrained equal-distance rule'' which measures agents' sacrifices based on the distance between assignments and peak amounts and strives to equalize those sacrifices, and the ``proportional rule'' which operates similarly to the former but gauges the sacrifice agents make in proportion to their peak amounts \citep[see][for further details on these rules]{thomson2014fully}. However, both the constrained equal-distance and proportional rules are incompatible with NOM.

Our goal is to introduce and characterize a large family of own-peak-only rules satisfying NOM, which we call "simple".\footnote{Within this class, we can construct rules that partially recover constrained equal-distance and proportional principles while remaining compatible with NOM.}
Their definition is as follows. In economies with excess demand, simple rules fully satiate agents whose peak amount is less than or equal to equal division and assign an amount between equal division and his peak to each remaining agent. Symmetrically, in economies with excess supply, simple rules fully satiate agents whose peak is greater than or equal to equal division and assign to each remaining agent an amount between his peak and equal division. 
Agents that are fully satiated by these rules, which we also call ``simple'', are thus rewarded since they are responsible for all exchanges. Simple rules can also be thought of as a two-step procedure. In the first step, simple agents are fully satiated and the rest are provisionally assigned equal division. In the second step, by solving a claims problem, the provisional assignments are adjusted to attain overall feasibility, provided that peak amounts are not overpassed.

We believe that the full family of simple rules has been dormant in the literature because many of the rules (and families of rules) identified in the literature belong to this large family. This is the case, for example, of the family of sequential rules in \cite{barbera1997strategy} that are defined throughout an intricated process needed to cope with strategy-proofness.

In our main result, Theorem \ref{characterization}, we show that an allotment rule is simple if and only if it is own-peak-only, efficient, satisfies NOM, and meets the equal division guarantee. This last property is a minimal fairness requirement that states that whenever an agent demands equal division, the rule must guarantee him that amount. If we narrow the picture to symmetric rules, i.e., rules that assign indifferent allotments to agents with the same preferences, then we also obtain a characterization of a big subfamily of simple rules that satisfies all five properties (Corollary \ref{characterization 2}).\footnote{Note that although symmetry and the equal division guarantee can both be considered as fairness requirements, they are logically independent properties.} As we previously said, in contrast, when we replace NOM with strategy-proofness the only efficient and symmetric rule that remains is the uniform rule \citep{ching1994alternative}.   
We also provide two variants of our main characterization: 
(i) adapting the result to economies with individual endowments (Proposition \ref{characterization bis}) and (ii) invoking a peak responsiveness property that also encompasses symmetry and the equal division guarantee (Proposition \ref{characterization3}). 

Next, we analyze the maximality of the domain of preferences (including the domain of single-peaked preferences) for which a rule satisfying  own-peak-onliness, efficiency, the equal division guarantee, and NOM  exists. For the properties of efficiency, strategy-proofness, and symmetry, the single-plateaued domain is maximal \citep{ching1998maximal,masso2001maximal}. In Theorem \ref{theo max domain}, we show that the single-plateaued domain is maximal for our properties as well.  Therefore, even though replacing strategy-proofness with NOM greatly expands the family of admissible rules, the maximal domain of preferences involved remains basically unaltered.

To the best of our knowledge, our paper is the first one that applies \cite{troyan2020obvious} notion of obvious manipulations to the allocation of a non-disposable commodity among agents with single-peaked preferences. In the context of voting, \cite{aziz2021obvious} and  \cite{arribillaga2024obvious} study non-peaks-only and peaks-only rules, respectively.  Other recent papers that study obvious manipulations, in other situations, are  \cite{ortega2022obvious}, \cite{psomas2022fair}, and \cite{arribillaga2023obvious}.


The rest of the paper is organized as follows. The model and the concept of obvious manipulations are introduced in Section \ref{Preliminaries}. In Section \ref{simple section}, after analyzing three classical rules, we present simple rules and their main characterization. Section \ref{section further} introduces some related characterizations. The maximal domain result is presented in Section \ref{section maximal}. To conclude, some final remarks are gathered in Section \ref{section final}. 

\section{Preliminaries}\label{Preliminaries}

\subsection{Model} \label{2.1}

A social endowment $\Omega \in \mathbb{R}_{++}$ is an amount of a perfectly divisible commodity to be distributed among a set of agents $N=\{1,2,\ldots, n\}$. Each $i \in N$ is equipped with  a  continuous  preference relation $R_i$ defined over $\mathbb{R}_+ \cup \{\infty\}$. 
Call $P_i$ and $I_i$ to the strict preference and indifference relations associated with $R_i,$ respectively. Denote by $\mathcal{U}$ the domain of all such preferences. Given  $R_i\in \mathcal{U}$, let $p(R_i)=\{x\in\mathbb{R}_+ \cup \{\infty\}: xR_iy \text{ for each } y\in\mathbb{R}_+\cup \{\infty\} \}$ be the set of preferred consumptions
according to $R_i$, called the \textbf{peak} of $R_i$. When $p(R_i)$ is a singleton, we slightly abuse notation and use $p(R_i)$ to denote its single element. Agents $i$'s preference $R_i \in \mathcal{U}$ is \textbf{single-peaked} if $p(R_i)$ is a singleton and, for each pair $\{x_i, x_i'\} \subseteq \mathbb{R}_+$, we have $x_iP_ix_i'$ as long as either $x_i'<x_i\leq p(R_i)$ or $p(R_i) \leq x_i<x_i'$ holds. Denote by $\mathcal{SP}$ the domain of all such preferences.

A (generic) domain of preferences $\mathcal{D}$ is a subset of $\mathcal{U}.$ Given  a domain of preferences $\mathcal{D}\subseteq \mathcal{U}$, an  \textbf{economy} in $\mathcal{D}$ 
consists of a profile of preferences $R=(R_j)_{j \in N} \in \mathcal{D}^n$ and a social  endowment $\Omega \in \mathbb{R}_{++}$ and is denoted by $(R,\Omega)$. Let  $\mathcal{E}_{\mathcal{D}}$ be the domain of all such economies. Given a social endowment $\Omega \in \mathbb{R}_{++}$, the set of \textbf{(feasible) allotments} of $\Omega$ is $X(\Omega)=\{x \in  \mathbb{R}^n_+ : \sum_{j\in N}x_j= \Omega\}$.
An   \textbf{(allotment) rule} on $\mathcal{E}_\mathcal{D}$ is a function $\varphi: \mathcal{E}_\mathcal{D} \longrightarrow \mathbb{R}^n_+$ such that $\varphi(R, \Omega) \in X(\Omega)$ for each $(R, \Omega) \in \mathcal{E}_\mathcal{D}.$ 

Given a rule  $\varphi$ defined on a generic domain $\mathcal{E}_\mathcal{D}$, some desirable properties we consider are listed next. 

\vspace{5 pt}

\noindent \textbf{Efficiency:} For each $(R, \Omega) \in \mathcal{E}_\mathcal{D}$, there is no $x \in X(\Omega)$ such that $x_iR_i\varphi_i(R, \Omega)$ for each $i \in N$ and $x_iP_i\varphi_i(R, \Omega)$ for some $i \in N.$ 

\vspace{5 pt}

\noindent Efficiency is the usual Pareto optimality criterion. Under this condition, for each economy, the allocation selected by the rule should be such that there is no other allocation that all agents find at least as desirable and at least one agent (strictly) prefers. As usual, throughout the paper, we assume that all rule satisfies this property.

Next, we introduce a useful property.

\vspace{5 pt}
\noindent
\textbf{Same-sidedness:} For each $(R,\Omega) \in \mathcal{E}_\mathcal{SP}$,
\begin{enumerate}[(i)]
    \item $\sum_{j \in N}p(R_j) \geq \Omega$ implies  $\varphi_i(R)\leq p(R_i)$ for each $i \in N,$ and
    \item $\sum_{j \in N}p(R_j) \leq \Omega$ implies  $\varphi_i(R)\geq p(R_i)$ for each $i \in N$.
\end{enumerate}
\vspace{5 pt}

\begin{remark}\label{same eff}
Let $\varphi$ be a rule defined on $\mathcal{E}_\mathcal{SP}$. Then, $\varphi$ is efficient if and only if it is same-sided. 
\end{remark}

A rule is strategy-proof if, for each agent,truth-telling is always optimal, regardless of the preferences declared by the other agents. To define it formally, let $R_i, R_i' \in \mathcal{D}$, and $\Omega \in \mathbb{R}_{++}.$ Preference   $R_{i}^{\prime }$ is a \textbf{manipulation  of  
$\boldsymbol{\varphi$ at $(R_i,\Omega)}$} if there is $R_{-i} \in \mathcal{D}^{n-1}$ such that $\varphi_i(R_i', R_{-i}, \Omega)P_i\varphi_i(R_i, R_{-i}, \Omega).$

\vspace{5 pt}

\noindent
\textbf{Strategy-proofness:} For each $i \in N$ and each  $(R_i, \Omega) \in \mathcal{D} \times \mathbb{R}_{++},$ there is no manipulation of $\varphi$ at $(R_i, \Omega)$.

\vspace{5 pt}

The following is an informational simplicity property stating that if an agent unilaterally changes his preference for another one with the same peak, then his allotment remains unchanged.\footnote{This property is weaker than the ``peak-only'' property, that has been imposed in a number of axiomatic studies. See Section \ref{section final} for more details.} 


\vspace{5 pt}

\noindent
\textbf{Own-peak-onliness:} For each  $(R, \Omega) \in \mathcal{E}_\mathcal{D},$ each $i \in N,$ and each $R_i' \in  \mathcal{D}$ such that $p(R_i')=p(R_i),$ we have $\varphi_i(R, \Omega)=\varphi_i(R_i', R_{-i}, \Omega).$

\vspace{5 pt}

\noindent Analyzing the uniform rule, \cite{sprumont1991division} derives the own-peak-only property from efficiency and strategy-proofness \citep[see also][]{ching1994alternative}. Since we do not impose strategy-proofness, 
we explicitly invoke own-peak-onliness.

A well-known fairness property states that agents with the same preferences should be assigned indifferent allocations.

\vspace{5 pt}
\noindent
\textbf{Symmetry:} For each   $(R,\Omega) \in \mathcal{D}$ and each $\{i,j\} \subseteq N$ such that $R_i=R_j$, $\varphi_i(R, \Omega)I_i\varphi_j(R, \Omega)$.

Next, we introduce a new minimal fairness condition that formalizes the idea that each agent has the right to equal division. It states that whenever equal division is claimed, it has to be guaranteed. 



\vspace{5 pt}

\noindent
\textbf{Equal division guarantee:} For each   $(R,\Omega) \in \mathcal{D}$ and each $i \in N$ such that $\frac{\Omega}{n}\in p(R_i)$, we have $\varphi_i(R)I_i\frac{\Omega}{n}$.

\vspace{5 pt}

\noindent As we will see, the equal division guarantee and symmetry are independent properties. Moreover, in Appendix \ref{on the EDG} we show that two well-known fairness properties, envy-freeness and the equal division lower bound, imply the equal division guarantee for own-peak-only rules.









\subsection{Obvious manipulations}

The notion of obvious manipulation is introduced by \cite{troyan2020obvious}.  
They try to single out those manipulations that are easily identifiable by the agents. 
 A manipulation is obvious  if  the best possible outcome under the manipulation is strictly better than the best possible outcome under truth-telling or 
the worst possible outcome under the manipulation is strictly better than the worst possible outcome under truth-telling. 
However, under efficiency, the best possible outcome under truth-telling for an agent is always a peak alternative for that agent in our model. Therefore, no manipulation becomes obvious by considering best possible outcomes and so we omit this in our definition. The derived notion of not obvious manipulation then becomes equivalent to the notion of maximin-strategy-proofness proposed by \cite{brams2006better,brams2008proportional} in the context of cake-cutting.


 When the set of alternatives is infinite, worst possible outcomes may not be well-defined and a more general definition is necessary in this case. We say that a manipulation is obvious if  each possible outcome under the manipulation is strictly better than some possible outcome under truth-telling. 


Before we present the formal definition, we introduce some notation. Given a rule $\varphi$ defined on $\mathcal{E}_\mathcal{D}$ and $(R_i, \Omega) \in \mathcal{D} \times \mathbb{R}_{++}$, the \textbf{option set} attainable with $(R_{i}, \Omega)$ at $\varphi$  is 
\begin{equation*}
O^\varphi(R_{i},\Omega)=\left\{x \in [0, \Omega] 
\ : \ x=\varphi_i(R_{i},R_{-i}) \text{ for some } R_{-i}\in \mathcal{D}^{n-1}\right\}.\footnote{\cite{barbera1990strategy} were the first to use option sets in the context of preference aggregation.}
\end{equation*}

\begin{definition}\label{def OM}
Let $\varphi$  be a rule defined on $\mathcal{E}_\mathcal{D}$ and let $R_i, R_i' \in \mathcal{D}$ and $\Omega \in \mathbb{R}_{++}$ be such that   $R_{i}^{\prime }$ is a manipulation  of  
$\varphi$ at $(R_i,\Omega)$. Then,  $R_i'$ is an  \textbf{obvious manipulation of $\boldsymbol{\varphi$ at $(R_i,\Omega)}$} if for each 
$x' \in O^\varphi(R_{i}^{\prime },\Omega)$ there is $x \in O^\varphi(R_{i},\Omega)$ such that $x'P_i x.$ 
The rule $\varphi$ is \textbf{not obviously manipulable (NOM)} if it does not admit any obvious manipulation. Otherwise, we say that the rule is  \textbf{obviously manipulable (OM)}.
\end{definition}

The next proposition shows that our definition generalizes  \cite{troyan2020obvious}'s definition. In fact, both definitions are equivalent whenever worst possible outcomes exist. Given $R_i$ and $Y\subseteq [0, \Omega]$, denote by $W(R_i, Y)$ the element in $Y$ (if any) such that $xP_i W(R_i, Y)$ for each $x \in Y\setminus \{W(R_i, Y)\}.$

\begin{proposition}\label{equivalencia}
Let $\varphi$ be a rule defined on $\mathcal{E}_\mathcal{D}$ and let $R_i, R_i' \in \mathcal{D}$ and $\Omega \in \mathbb{R}_{++}$ be such that   $R_{i}^{\prime }$ is a manipulation  of  
$\varphi$ at $(R_i,\Omega)$. Assume that $W(R_i,O^\varphi(R_{i}, \Omega))$ and $W(R_i,O^\varphi(R'_{i}, \Omega))$ exist. Then, $R'_{i}$ is an obvious manipulation if and only if $W(R_i,O^\varphi(R_{i}^{\prime },\Omega) \ P_i \  W(R_i, O^\varphi(R_{i},\Omega))$.

\end{proposition}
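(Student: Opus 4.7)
The plan is to prove both directions directly from the definitions, using transitivity and the defining property of the worst element, namely that $x P_i W(R_i,Y)$ for every $x \in Y \setminus \{W(R_i,Y)\}$.

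For the $(\Rightarrow)$ direction I would set $x' := W(R_i, O^\varphi(R_i',\Omega))$. Since by hypothesis $R_i'$ is an obvious manipulation, there exists $x \in O^\varphi(R_i,\Omega)$ with $x' P_i x$. By the definition of the worst element, $x R_i W(R_i, O^\varphi(R_i,\Omega))$ (with indifference possible only when $x$ coincides with the worst element). Combining $x' P_i x$ with $x R_i W(R_i, O^\varphi(R_i,\Omega))$ by transitivity yields exactly $W(R_i, O^\varphi(R_i',\Omega)) P_i W(R_i, O^\varphi(R_i,\Omega))$, as desired.

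For the $(\Leftarrow)$ direction, assume $W(R_i, O^\varphi(R_i',\Omega)) P_i W(R_i, O^\varphi(R_i,\Omega))$. Pick any $x' \in O^\varphi(R_i',\Omega)$. By definition of the worst element on the $R_i'$ side, we have $x' R_i W(R_i, O^\varphi(R_i',\Omega))$. Chaining this with the hypothesis gives $x' P_i W(R_i, O^\varphi(R_i,\Omega))$, so taking $x := W(R_i, O^\varphi(R_i,\Omega)) \in O^\varphi(R_i,\Omega)$ witnesses the obvious-manipulation condition for $x'$. Since $x'$ was arbitrary, $R_i'$ is an obvious manipulation of $\varphi$ at $(R_i,\Omega)$.

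I do not expect any genuine obstacle: the result is essentially a transitivity argument. The only subtle point is distinguishing weak from strict preference when an element in an option set equals the worst element of that set — but in both directions the strictness is supplied by the hypothesis (either by the definition of obvious manipulation or by the strict preference between the two worst elements), so composing with the weak inequality coming from the worst-element characterization suffices. No appeal to properties of $\varphi$ such as efficiency or own-peak-onliness is needed.
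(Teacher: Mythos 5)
Your proof is correct and follows essentially the same route as the paper's: the forward direction applies the obvious-manipulation condition to $W(R_i,O^\varphi(R_i',\Omega))$ itself and then invokes the defining property of the worst element, while the backward direction uses $W(R_i,O^\varphi(R_i,\Omega))$ as the universal witness and chains $x' \, R_i \, W(R_i,O^\varphi(R_i',\Omega)) \, P_i \, W(R_i,O^\varphi(R_i,\Omega))$. Your explicit handling of the weak-versus-strict distinction at the worst element is a point the paper leaves implicit, but the argument is the same.
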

\begin{proof}
($\Longrightarrow$) Let $R'_{i}$  be an obvious manipulation of $\varphi$ at $(R_i,\Omega)$. As $W(R_i, O^\varphi(R_{i}',\Omega)) \in  O^\varphi(R_{i}',\Omega)$, there is   $x \in O^\varphi(R_{i},\Omega))$ such that $W(R_i, O^\varphi(R_{i}',\Omega))P_ix$. Then, by definition of $W(R_i, O^\varphi(R_{i},\Omega))$, 
\begin{equation}\label{def equivalencia}
    W(R_i, O^\varphi(R_{i}',\Omega)) \ P_i \ W(R_i, O^\varphi(R_{i},\Omega))
\end{equation}
\noindent ($\Longleftarrow$) Assume that \eqref{def equivalencia} holds. We will see that $R_i'$ is an obvious manipulation of $\varphi$ at $(R_i,\Omega)$. Let $x' \in O^\varphi(R_{i}',\Omega).$ Then, by definition of $W(R_i, O^\varphi(R_{i}',\Omega))$ and \eqref{def equivalencia},  $$x'R_iW(R_i, O^\varphi(R_{i}',\Omega)) \ P_i \ W(R_i, O^\varphi(R_{i},\Omega)).$$ As $W(R_i, O^\varphi(R_{i},\Omega)) \in O^\varphi(R_{i},\Omega)$, the proof is complete. 
\end{proof}

\section{NOM rules} \label{simple section}

In this section, we focus on the well-known and widely used domain of single-peaked preferences. In subsection \ref{famous}, three famous rules are asses concerning their obvious manipulability. In subsection \ref{simple} we introduce a new class of rules called ``simple''  that play a relevant role in our characterizations and some examples of relevant simple rules are given.
In subsection \ref{subsection characterizations}, our two main characterizations are presented. In subsection \ref{subsection independence}, the independence of the axioms involved in the characterizations is discussed.

\subsection{Tree famous rules} \label{famous}

In economies with single-peaked preferences, one rule stands out as the best-behaved from several standpoints: the ``uniform rule'' \citep{sprumont1991division}. This rule tries to equate the amounts agents receive as much as possible, maintaining the allocation's efficiency.  Two other important rules in the literature are the ``constrained equal-distance rule'', which uses the distance between assignments and peak amounts as a measure of the sacrifice made by the agents and seeks to equate those sacrifices as much as possible; and the ``proportional rule'', that works similarly to the previous rule but measures the sacrifice agents make in proportion to their peak amounts \citep[see][for further details on these rules]{thomson2014fully}.
Their formal definitions are as follows: 
\begin{itemize}
    \item \textbf{Uniform rule, $\boldsymbol{U}$:} for each $(R, \Omega) \in \mathcal{E}_{\mathcal{SP}}$ and each $i \in N$,  
    $$U_i(R, \Omega)=\left\{\begin{array}{l l }
\min\{p(R_i), \lambda\} & \text{if }  \sum_{j \in N} p(R_j) \geq \Omega\\
\max\{p(R_i), \lambda\} & \text{if } \sum_{j \in N} p(R_j) < \Omega\\
\end{array}\right.$$
where $\lambda \geq 0$ and solves $\sum_{j \in N}U_j(R, \Omega)=\Omega.$
\vspace{5 pt}

\item \textbf{Constrained equal-distance rule, $\boldsymbol{CED}$:}  for each $(R, \Omega) \in \mathcal{E}_{\mathcal{SP}}$ and each $i \in N$,  
$$CED_i(R, \Omega)=\left\{\begin{array}{l l }
\max\{p(R_i)-d, 0\} & \text{if } \sum_{j \in N} p(R_j) \geq \Omega\\
p(R_i)+d & \text{if } \sum_{j \in N} p(R_j) < \Omega\\
\end{array}\right.$$
where $d \geq 0$ and solves $\sum_{j \in N}CED_j(R, \Omega)=\Omega.$
\vspace{5 pt}

\item \textbf{Proportional rule, $\boldsymbol{Pro}$:} for each $(R, \Omega) \in \mathcal{E}_{\mathcal{SP}}$ and each $i \in N$,  

$$Pro_i(R, \Omega)=\left\{\begin{array}{l l }
\frac{p(R_i)}{\sum_{j\in N} p(R_j)} \Omega & \text{if } \sum_{j\in N} p(R_j)> 0\\
\frac{\Omega}{n} & \text{if } \sum_{j\in N} p(R_j)= 0\\
\end{array}\right.$$

\end{itemize}

\vspace{10 pt}

It is well known that the uniform rule is strategy-proof on the single peak domain and therefore trivially NOM. As we will see, this rule belongs to a wide and versatile class of NOM rules. The other two discussed rules, though, are OM. We state this fact next.

\begin{proposition} \label{negative} The constrained equal-distance and proportional rules are OM on $\mathcal{E}_{\mathcal{SP}}$.
    
\end{proposition}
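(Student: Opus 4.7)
The plan is to prove each rule is OM by exhibiting an explicit obvious manipulation; a two-agent example with $\Omega=10$ suffices, and the construction extends routinely to larger $n$. In both cases the manipulator, agent~$1$, reports peak~$0$, which shrinks his option set by removing the large allotments that lie farthest from his true peak.

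For CED, let $R_1$ be a symmetric single-peaked preference with $p(R_1)=2$ and let $R_1'\in\mathcal{SP}$ satisfy $p(R_1')=0$. A routine case-split on excess supply versus excess demand shows that as $p(R_2)$ ranges over $[0,\infty)$, $CED_1(R_1,R_2,\Omega)$ equals $(12-p(R_2))/2$ for $p(R_2)\in[0,12]$ and $0$ otherwise, so $O^{CED}(R_1,\Omega)=[0,6]$; similarly $O^{CED}(R_1',\Omega)=[0,5]$. Under the symmetric peak-$2$ preference $R_1$, the worst elements in these closed intervals are the endpoints farthest from $2$, namely $6$ and $5$ respectively, and $5\,P_1\,6$. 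Taking $p(R_2)=3$ shows $R_1'$ is a manipulation, since truth-telling assigns agent~$1$ the amount $4.5$ while the misreport assigns $3.5$, and $3.5$ is closer to peak~$2$. Proposition \ref{equivalencia} then yields that $R_1'$ is an obvious manipulation of CED at $(R_1,\Omega)$.

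For Pro, let $R_1$ be symmetric single-peaked with $p(R_1)=4$ and $R_1'\in\mathcal{SP}$ with $p(R_1')=0$. Since $Pro_1(R_1,R_2,\Omega)=40/(4+p(R_2))$ is continuous in $p(R_2)\in[0,\infty)$ with image $(0,10]$, $O^{Pro}(R_1,\Omega)=(0,10]$. Under $R_1'$ the tiebreak in the definition applies exactly when $p(R_2)=0$, giving $Pro_1=\Omega/n=5$, while $Pro_1=0$ whenever $p(R_2)>0$; hence $O^{Pro}(R_1',\Omega)=\{0,5\}$. Under the symmetric peak-$4$ preference $R_1$, the worst elements of these two sets are $10$ (distance $6$, attained at $p(R_2)=0$) and $0$ (distance $4$); thus $0\,P_1\,10$. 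Taking $p(R_2)=0$, truth-telling assigns $10$ and the misreport assigns $5$, and $5$ is strictly preferred to $10$ at peak $4$, so $R_1'$ is a manipulation. Proposition \ref{equivalencia} again applies.

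The delicate step is the case-by-case computation of the option sets; once they are available, comparing worst outcomes under symmetric single-peaked preferences reduces to a distance comparison. For Pro, the tiebreak $Pro=\Omega/n$ when $\sum p(R_j)=0$ is crucial: it is what places $\Omega/n$ into $O^{Pro}(R_1',\Omega)$ and makes $R_1'$ a manipulation in the first place; without it the misreport option set would collapse to $\{0\}$ and agent~$1$ could not benefit.
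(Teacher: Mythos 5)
Your proof is correct and takes essentially the same approach as the paper's: a two-agent example in which agent~$1$ profitably under-reports his peak to $0$, so that every outcome attainable under the misreport beats some outcome attainable under truth-telling (the paper uses one example, $\Omega=1$, $p(R_1)=1/3$ with $0\,R_1\,\tfrac12$ and $p(R_2)=0$, for both rules, and bounds the misreport option set by $\tfrac12$ rather than computing the option sets exactly). One small correction to your closing aside: without the tiebreak in the definition of $Pro$, agent~$1$ could still benefit from reporting peak $0$ --- e.g.\ with $p(R_2)=\tfrac12$ truth-telling yields $40/4.5>8$, which under the symmetric peak-$4$ preference is strictly worse than $0$ --- so the tiebreak is convenient for your particular worst-case comparison but not essential for the existence of a manipulation; this does not affect the validity of your argument.
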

\begin{proof}
First, we prove that the constrained equal-distance rule is OM. Let $N=\{1,2\}$ and consider $(R, \Omega)\in \mathcal{E}_{\mathcal{SP}}$ such that $\Omega=1$, $p(R_1)=\frac{1}{3}$ and $0 \ R_1 \ \frac{1}{2}$, and $p(R_2)=0$. Let $R_1' \in \mathcal{SP}$ be such that $p(R'_1)=0$. By definition of the rule and single-peakedness, $CED_1(R_1',R_2,\Omega)=\frac{1}{2} \ P_1 \ \frac{2}{3}=CED_1(R, \Omega)$, so $R'_1$ is a  manipulation of $CED$ at $(R_1, \Omega)$. Furthermore, for an arbitrary $R_2' \in \mathcal{SP}$, $CED_1(R_1',R_2', \Omega) \leq \frac{1}{2}$. Therefore, 
$$W(R_1,O^{CED}(R_{1}^{\prime },\Omega)) \ P_1 \  W(R_1, O^{CED}(R_{1},\Omega))$$
and $R'_1$ is an obvious manipulation of $CED$ at $(R_1, \Omega)$. The proof that the proportional rule is OM is similar and can be carried out using this same example, so we omit it. 
\end{proof}

In the following subsection, we present a rich and flexible class of rules that, at least partially, allow us to make NOM compatible with  proportionality and the equal-distance criterion, among others.


\subsection{Simple rules: definition and examples} \label{simple}

When considering the single-peaked domain of preferences, if equal division is taken as a benchmark of fairness, we can single out agents that play a prominent role in the allocation process. In economies with excess demand,  these agents demand less than or equal to equal division and thus free up resources for the rest of the agents. Similarly, in economies with excess supply, these agents demand more than or equal to equal division and thus withhold resources from the rest of the agents. We call these agents ``simple''. We propose the class of ``simple'' rules that reward simple agents by fully satiating them and give each remaining agent an amount between his peak and equal division.\footnote{Several rules in the literature behave in this way. For example, the uniform rule \citep{sprumont1991division} treats non-simple agents as equally as possible, and the sequential rules \citep{barbera1997strategy} allow for an asymmetric treatment of non-simple agents, although imposing a monotonicity condition.}

We can describe a simple rule by means of a two-step procedure. In the first step, simple agents receive their peak (as their final allotment) and the rest receive (provisionally) equal division. In the second step, the amounts assigned to non-simple agents are 
 adjusted to attain feasibility and not overpass the peak amounts: they are increased in economies with excess demand and decreased in economies with excess supply. This second step can be thought of as solving a claims problem\footnote{Remember that a \emph{claims problem} for a set of agents $\overline{N}$ consists of an endowment $E \in \mathbb{R}_+$ and, for each $i \in \overline{N}$, a claim $c_i \in \mathbb{R}_+$ such that $\sum_{j \in \overline{N}}c_j\geq E$.} among non-simple agents. In this problem, the endowment is the (positive) difference between the total amount $\Omega$ and what already has been allocated in the first step and, for each non-simple agent, the claim is the (positive) difference between his peak and equal division. The endowment thus defined guarantees feasibility in the original problem, whereas the claims thus defined guarantee that adjusted amounts do not overpass the peak amounts.\footnote{For a thorough account on these facts, see Lemma \ref{lema simple}.}

Given $(R,\Omega) \in \mathcal{E}_{\mathcal{SP}},$  let $z(R, \Omega)=\sum_{j \in N}p(R_j)-\Omega.$ If $z(R,\Omega) \geq 0$ we say that economy $(R, \Omega)$ has \textbf{excess demand} whereas if $z(R,\Omega)<0$ we say that economy $(R, \Omega)$ has \textbf{excess supply}.\footnote{We deliberately include the \textbf{balanced} case where $z(R, \Omega)=0$ in the excess demand case. No confusion will arise.} Given economy $(R, \Omega) \in \mathcal{E}_{\mathcal{SP}},$ agent $i \in N$ is  \textbf{simple} if either $z(R, \Omega) \geq 0$ and $p(R_i)<\frac{\Omega}{n}$ or $z(R, \Omega)\leq 0$ and $p(R_i)>\frac{\Omega}{n}$. Let $N^+(R, \Omega)$ denote the set of simple agents of economy $(R, \Omega)$ and let $N^-(R,\Omega)=N \setminus N^+(R, \Omega)$ be the set of non-simple agents.  Let $$E(R,\Omega)= \left|\Omega- \left(\sum_{j\in N^+(R, \Omega)} p(R_j)+ |N^-(R, \Omega)| \frac{\Omega}{n}\right)\right|$$ be the amount to be adjusted in the second step of the procedure just described. We are now in a position to define our rules.

 \begin{definition}\label{def simple}
An own-peak-only rule $\varphi$ defined on 
 $\mathcal{E}_\mathcal{SP}$ is  \textbf{simple} if for each $ (R, \Omega) \in \mathcal{E}_{\mathcal{SP}}$ and each $i\in N$,
$$\varphi_i(R, \Omega)=\left\{\begin{array}{l l }
p(R_i) & \text{if } i\in N^+(R, \Omega) \\
\frac{\Omega}{n}+\nu_i & \text{if }  i\in N^-(R, \Omega) \text{ and } z(R, \Omega)\geq 0\\
\frac{\Omega}{n}-\nu_i & \text{if }  i\in N^-(R, \Omega) \text{ and } z(R, \Omega)\leq 0\\
\end{array}\right.$$
where $\nu_i=\nu_i(R,\Omega)$ is such that $0\leq \nu_i\leq |p(R_i)-\frac{\Omega}{n}|$ and  $\sum_{j \in N^-(R, \Omega)}\nu_j=E(R,\Omega).$ 
\end{definition}
Before proceeding, two remarks about defining a simple rule are in order.
First, finding $\nu_i(R,\Omega)$ for each $i\in N^-(R, \Omega)$ is equivalent to solving a claims problem involving all non-simple agents, where $E(R,\Omega)$ is the endowment and $c_i(R,\Omega)=|p(R_i)-\frac{\Omega}{n}|$ is the claim for each non-simple agent $i\in N^-(R, \Omega)$. Note that if $z(R, \Omega)\geq 0$, $c_i(R_i,\Omega)=p(R_i)-\frac{\Omega}{n}$ for each  $i\in N^-(R, \Omega)$. Therefore,\footnote{In what follows, we omit the dependence on $(R, \Omega)$ to ease notation.}
$$\sum_{j\in N^-}c_j+\sum_{j\in N^+} \left(p(R_j)-\frac{\Omega}{n}\right) =\sum_{j \in N} \left(p(R_j)-\frac{\Omega}{n}\right)=z(R, \Omega)\geq 0,$$ implying that $$\sum_{j \in N^-}c_j \geq \sum_{j\in N^+} \left(\frac{\Omega}{n}-p(R_j)\right)= \Omega- \left(\sum_{j \in N^+} p(R_j)+|N^-| \frac{\Omega}{n}\right)= E(R,\Omega).$$ 
Thus, $\sum_{j \in N^-}c_j \geq  E(R,\Omega)$ and the claims problem is well-defined. Therefore the existence of $\nu_i(R,\Omega)$ is guaranteed. A similar check can be done for economies in which  $z(R, \Omega)\leq 0$. 
Second, notice that the requirement of own-peak-onliness for $\varphi$ is equivalent to $\nu_i(R_i,R_{-i},\Omega)=\nu_i(R'_i,R_{-i},\Omega)$ whenever $p(R_i)=p(R'_i)$.


Notice that once $\nu$ is specified, the associated simple rule $\varphi$ is completely determined. Next, we present some relevant examples.

\begin{example}\label{example claims rules}
Consider following three known claims rules
\begin{itemize}
    
     \item \textbf {constrained equal awards:} for each $(R,\Omega) \in \mathcal{E}_{\mathcal{SP}}$ and each $i\in N^-(R, \Omega)$, 
$$\nu^{CEA}_i(R, \Omega)= \min\left\{\left|p(R_i)-\frac{\Omega}{n}\right|, \lambda\right\} $$
where $\lambda \geq 0$ and solves $\sum_{j \in N^-(R, \Omega)} \nu^{CEA}_j(R, \Omega)=E(R,\Omega).$

    \item \textbf{constrained equal losses:} for each $(R,\omega) \in \mathcal{E}_{\mathcal{SP}}$ and each $i \in N^-(R,\Omega)$,   

$$\nu^{CEL}_i(R, \Omega)= \min\left\{0,\left|p(R_i)-\frac{\Omega}{n}\right|- \lambda\right\} $$
where $\lambda \geq 0$ and solves $\sum_{j \in N^-(R, \Omega)} \nu^{CEL}_j(R, \Omega)=E(R,\Omega),$ and

\item \textbf{proportional:} for each $(R,\omega) \in \mathcal{E}_{\mathcal{SP}}$ and each $i \in N^-(R,\Omega)$,   

$$\nu^{Pro}_i(R,\Omega)= \frac{|p(R_i)-\frac{\Omega}{n}|}{\sum_{j\in N^-(R, \Omega)} |p(R_j)-\frac{\Omega}{n}|} E(R,\Omega).$$
\end{itemize}

It is easy to see that its associated simple rule is no other than the previously defined uniform rule. We call the simple rules associated with $\nu^{CEL}$ and $\nu^{Pro}$ \textbf{constrained equal-distance simple} and \textbf{proportional simple}  rules, respectively. Observe that, by Definition \ref{def simple} and the definition of $\nu^{CEL}$, a non-simple agent receives $\frac{\Omega}{n}$ or $p(R_i) \pm \lambda$ in the constrained equal-distance simple rule. Therefore, this rule effectively seeks to equate the distance between assignments and peak amounts as much as possible among non-simple agents. Similarly, by Definition \ref{def simple} and the definition of $\nu^{Pro}$, for a non-simple agent, the distance between equal division and his assignment in the proportional simple rule is proportional to the distance between equal division and his peak. \footnote{Observe that allocating $E(R,\Omega)$ proportionally with respect to the peaks could assign to some non-simple agent $i$ more than $|p(R_i)-\frac{\Omega}{n}|$. As a consequence, such allocation can not be used to construct a simple rule, because some non-simple agent could obtain an amount that is not between the equal division and his peak.}
\hfill $\Diamond$ 

\end{example}

An algorithmic procedure is provided in Appendix \ref{appendix} as an alternative way to construct a generic simple rule without using claims rules. 

\subsection{Characterizations}\label{subsection characterizations}

Before presenting the characterizations, we introduce a useful result. The following lemma provides an easy description of simple rules. 
Given $x,y,z \in \mathbb{R}$, we say that \textbf{$\boldsymbol{x$ is between $y$ and $z}$} if $y\leq z$ and $x \in [y,z]$ or $z\leq y$ and $x \in [z,y].$

\begin{lemma}    
\label{lema simple}
An own-peak-only rule $\varphi$ defined on 
 $\mathcal{E}_\mathcal{SP}$ is  simple if and only if for each $ (R, \Omega) \in \mathcal{E}_{\mathcal{SP}}$,
\begin{enumerate}[(i)]
    \item $\varphi_i(R, \Omega)=p(R_i)$ if  $i \in N^+(R, \Omega),$ and
    \item $\varphi_i(R, \Omega)$ is between $\frac{\Omega}{n}$ and $p(R_i)$ for each $i \in N^-(R, \Omega)$.
\end{enumerate}
\end{lemma}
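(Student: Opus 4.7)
The plan is to prove the two directions of the equivalence by directly unpacking Definition~\ref{def simple}. The ($\Rightarrow$) direction is essentially a reading of the definition, while the ($\Leftarrow$) direction asks us to construct explicit parameters $\nu_i$ from $\varphi$ and verify they satisfy the three conditions listed in the definition.

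For the ($\Rightarrow$) direction, I would assume $\varphi$ is simple and fix $(R,\Omega)\in\mathcal{E}_\mathcal{SP}$. Property (i) is immediate from the first branch of Definition~\ref{def simple}. For (ii), take $i\in N^-(R,\Omega)$. If $z(R,\Omega)\geq 0$, non-simpleness of $i$ forces $p(R_i)\geq \tfrac{\Omega}{n}$, so $|p(R_i)-\tfrac{\Omega}{n}|=p(R_i)-\tfrac{\Omega}{n}$; combined with $0\leq\nu_i\leq |p(R_i)-\tfrac{\Omega}{n}|$ and $\varphi_i(R,\Omega)=\tfrac{\Omega}{n}+\nu_i$, this yields $\tfrac{\Omega}{n}\leq\varphi_i(R,\Omega)\leq p(R_i)$, i.e., $\varphi_i(R,\Omega)$ lies between $\tfrac{\Omega}{n}$ and $p(R_i)$. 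The case $z(R,\Omega)\leq 0$ is handled symmetrically, using that non-simpleness now gives $p(R_i)\leq\tfrac{\Omega}{n}$.

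For the ($\Leftarrow$) direction, I would assume $\varphi$ satisfies (i) and (ii), and define $\nu_i(R,\Omega)=|\varphi_i(R,\Omega)-\tfrac{\Omega}{n}|$ for each $i\in N^-(R,\Omega)$. Property (ii) immediately gives $0\leq\nu_i\leq|p(R_i)-\tfrac{\Omega}{n}|$. When $z(R,\Omega)\geq 0$, non-simpleness implies $p(R_i)\geq\tfrac{\Omega}{n}$, and by (ii) $\varphi_i(R,\Omega)\geq\tfrac{\Omega}{n}$, so $\varphi_i(R,\Omega)=\tfrac{\Omega}{n}+\nu_i$, matching the second branch of the definition; the case $z(R,\Omega)\leq 0$ is analogous. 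The summation constraint $\sum_{j\in N^-(R,\Omega)}\nu_j=E(R,\Omega)$ then follows by isolating the non-simple contribution in the feasibility identity $\sum_{j\in N}\varphi_j(R,\Omega)=\Omega$ via (i), and comparing with the definition of $E(R,\Omega)$. Own-peak-onliness of the constructed $\nu$ is inherited from $\varphi$ since the sets $N^{\pm}(R,\Omega)$ depend only on peaks.

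I do not anticipate any substantive obstacle; the lemma is essentially a repackaging of the definition. The only mildly delicate point is the boundary case $z(R,\Omega)=0$, where any non-simple agent $i$ must have $p(R_i)=\tfrac{\Omega}{n}$, forcing $\nu_i=0$, so the two branches of Definition~\ref{def simple} collapse harmlessly and the ``between'' condition in (ii) trivially reduces to $\varphi_i(R,\Omega)=\tfrac{\Omega}{n}$.
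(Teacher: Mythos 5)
Your proposal is correct and follows essentially the same route as the paper's proof: the ($\Rightarrow$) direction unpacks Definition~\ref{def simple} case by case, and the ($\Leftarrow$) direction constructs $\nu_i$ as the (absolute) deviation of $\varphi_i(R,\Omega)$ from $\frac{\Omega}{n}$ and verifies the bounds and the summation constraint via feasibility. The only cosmetic difference is that you use $|\varphi_i(R,\Omega)-\frac{\Omega}{n}|$ to unify the excess-demand and excess-supply cases, whereas the paper treats one case and invokes symmetry.
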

\begin{proof}
    ($\Longrightarrow$) Let $\varphi$ be a simple rule on $\mathcal{E}_\mathcal{SP}$. Then, (i) is trivially satisfied. To see (ii),  consider an economy $(R, \Omega)\in \mathcal{E}_\mathcal{SP}$ and $i\in N^-(R, \Omega)$. Assume that $z(R, \Omega)\geq 0$ (the other case is symmetric). Then, $|p(R_i)-\frac{\Omega}{n}|=p(R_i)-\frac{\Omega}{n}$. Therefore, as $0\leq \nu_i(R, \Omega)\leq p(R_i)-\frac{\Omega}{n}$ and $\varphi_i(R, \Omega)=\frac{\Omega}{n}+\nu_i(R, \Omega)$, 
    $$\frac{\Omega}{n}\leq \varphi_i(R, \Omega)\leq p(R_i).$$

    \noindent ($\Longleftarrow$) Let $\varphi$ be an own-peak-only rule defined on $\mathcal{E}_\mathcal{SP}$ and satisfying (i) and (ii). 
    Let  $(R, \Omega)\in \mathcal{E}_\mathcal{SP}$ and let  $i\in N^-(R, \Omega)$. Assume that $z(R, \Omega)\geq 0$ (the other case is symmetric). Then, $p(R_i) \geq \frac{\Omega}{n}$ and thus,  
     by (ii), 
    \begin{equation} \label{desigual}
        \frac{\Omega}{n}\leq\varphi_i(R, \Omega)\leq p(R_i). 
    \end{equation}
    Define, for each $i \in N^-(R, \Omega)$, $\nu_i(R,\Omega):=\varphi_i(R, \Omega)-\frac{\Omega}{n}$. By (\ref{desigual}), $$0\leq\nu_i(R,\Omega)\leq p(R_i)-\frac{\Omega}{n}=\left|p(R_i)-\frac{\Omega}{n}\right|.$$ 
    Furthermore, by feasibility and (i),   
    $\sum_{j \in N^+(R, \Omega)}  p(R_j)+\sum_{j\in N^-(R,\Omega)}\varphi_j(R, \Omega)= \Omega$. Therefore,  $\sum_{j \in N^-(R, \Omega)}(\nu_j(R,\Omega)+\frac{\Omega}{n})= \Omega-\sum_{j\in N^+(R, \Omega)} p(R_j)$, which in turn is equivalent to 
    $$\sum_{j \in N^-(R,\omega)}\nu_j(R,\Omega))= \Omega-\left(\sum_{j\in N^+(R,\Omega)} p(R_j)+|N^-(R,\Omega)|\frac{\Omega}{n}\right)=E(R,\Omega)$$
Therefore, $\varphi$ is a simple rule, and $\nu$ is the corresponding claims rule associated with $\varphi$.
\end{proof}

The following result shows that simple rules are not obviously manipulable. 

\begin{theorem} \label{t1} A simple rule defined on $\mathcal{E}_{\mathcal{SP}}$ satisfies NOM. 
\end{theorem}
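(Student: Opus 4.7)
My plan is to establish NOM by proving two structural facts about any simple rule $\varphi$: (a) the equal-division amount $\Omega/n$ belongs to $O^\varphi(R_i, \Omega)$ for every $R_i \in \mathcal{SP}$, and (b) $\Omega/n$ is an $R_i$-worst element of $O^\varphi(R_i, \Omega)$. Once these are in hand, NOM follows in one line. Fix a truthful report $R_i$ and any alternative $R_i'$; by (a) applied to $R_i'$, $x' := \Omega/n$ lies in $O^\varphi(R_i', \Omega)$, and by (b) applied to $R_i$, every $x \in O^\varphi(R_i, \Omega)$ satisfies $x R_i x'$, so $x' P_i x$ fails for every such $x$. By Definition \ref{def OM} this precludes $R_i'$ from being an obvious manipulation at $(R_i, \Omega)$, and since $R_i'$ was arbitrary, $\varphi$ is NOM.

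For fact (a), I would exhibit a single auxiliary profile: let $R_{-i}$ be any profile with $p(R_j) = \Omega/n$ for every $j \neq i$. A case split on whether $p(R_i)$ is less than, equal to, or greater than $\Omega/n$ shows that in all three cases no agent is simple (no one has a peak strictly on the ``simple'' side of $\Omega/n$), so $N^+(R, \Omega) = \emptyset$ and Definition \ref{def simple} forces the adjustment amount to be $E(R, \Omega) = |\Omega - n\cdot\Omega/n| = 0$. Since each $\nu_j \geq 0$ and $\sum_{j \in N^-}\nu_j = 0$, every $\nu_j = 0$, and hence $\varphi_i(R_i, R_{-i}, \Omega) = \Omega/n$ independently of which $R_i$ agent $i$ reports.

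For fact (b), Lemma \ref{lema simple} tells us that every element of $O^\varphi(R_i, \Omega)$ is either $p(R_i)$ (when $i$ turns out to be simple) or lies between $\Omega/n$ and $p(R_i)$ (when $i$ is non-simple). Thus the whole option set is contained in the closed interval with endpoints $p(R_i)$ and $\Omega/n$, and single-peakedness of $R_i$ then yields $x R_i \Omega/n$ for every such $x$, since $\Omega/n$ is the endpoint of this interval farthest from the peak. I do not foresee any substantive obstacle; the delicate point is really just fact (a), where one must check that the claims-problem feasibility conditions $0 \leq \nu_j \leq |p(R_j) - \Omega/n|$ together with $\sum \nu_j = 0$ pin down $\varphi_i = \Omega/n$ uniformly across every possible report of agent $i$, which is what makes $\Omega/n$ play the role of a universal worst-case outcome.
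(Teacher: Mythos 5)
Your proof is correct and follows essentially the same route as the paper's: both arguments rest on showing that $\frac{\Omega}{n}$ always lies in the option set (via the same kind of auxiliary profile with all other peaks at or near equal division) and that, by Lemma \ref{lema simple} and single-peakedness, every truthful outcome is weakly preferred to $\frac{\Omega}{n}$. The only difference is that the paper proves the stronger claim that $O^\varphi(R_i,\Omega)$ is exactly the interval between $\frac{\Omega}{n}$ and $p(R_i)$, whereas you extract just the two facts needed for NOM; this is a harmless economization, not a different method.
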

\begin{proof} Let $\varphi$ be a simple rule defined on $\mathcal{E}_\mathcal{SP}$, and let $i\in N$ and $(R_i, \Omega) \in \mathcal{SP} \times \mathbb{R}_{++}$.

\noindent \textbf{Claim: $\boldsymbol{O^\varphi(R_i,\Omega)=\begin{cases}
    \left[\frac{\Omega}{n}, p(R_i)\right] & \text{ if } p(R_i)>\frac{\Omega}{n} \\
    \left[p(R_i), \frac{\Omega}{n}\right] & \text{ if } p(R_i)\leq \frac{\Omega}{n}\\
\end{cases}}$} 

\noindent Assume that $\frac{\Omega}{n} <p(R_i) $ (the other case is symmetric).  That $O^\varphi(R_i, \Omega) \subseteq [\frac{\Omega}{n}, p(R_i)]$ is clear by the definition of the rule. Now, let $x\in [\frac{\Omega}{n},p(R_i)]$. We will prove that $x \in O^\varphi(R_i, \Omega)$. Consider $R_{-i} \in \mathcal{SP}^{n-1}$ such that $p(R_{j})=\frac{\Omega}{n}-\frac{x-\frac{\Omega}{n}}{n-1}$ for each  $j\in N\setminus\{i\}$. Then, as $\frac{\Omega}{n}\leq x$, $p(R_{j})\leq\frac{\Omega}{n}\leq x$ for all $j\in N\setminus\{i\}$ and $\sum_{j\in N\setminus\{i\} }p(R_j)+x=\Omega$. By definition of the rule, $\varphi_j(R, \Omega)=p(R_{j})$ for all $j\in N\setminus\{i\}$. Then, by feasibility, $\varphi_i(R, \Omega)=x$. Therefore, $[\frac{\Omega}{n} , p(R_i)]\subseteq O^\varphi(R_i,\Omega)$. This completes the proof of the Claim.

To see that $\varphi$ satisfies NOM, assume that  $R_i'$ is a manipulation of $\varphi$ at $(R_i, \Omega)$. By the Claim, $\frac{\Omega}{n} \in O^\varphi(R_i', \Omega)$. Furthermore, by  the Claim and single-peakedness, $xR_i\frac{\Omega}{n}$ for each $x\in O^\varphi(R_i)$. Then, $R_i'$ is not an obvious manipulation and so $\varphi$ is NOM. 
\end{proof}


In Proposition \ref{negative} we saw that both the constrained equal-distance and the proportional criteria are incompatible with NOM when all agents are considered. However, as can be seen in Example \ref{example claims rules}, we can construct simple rules that (at least partially) recover those principles and at the same time, by Theorem \ref{t1}, are compatible with NOM.\footnote{Notice that the incompatibility with NOM remains if a rule assigns the peak amount to simple agents and applies either the constrained equal-distance or the proportional criteria to non-simple agents \emph{without} provisionally assigning equal division to them in a first step. Actually, this is what we do in the examples within the proof of Proposition \ref{negative}.}

Next, we present our first characterization result. 

\begin{theorem} \label{characterization}
 A rule defined on  $\mathcal{E}_\mathcal{SP}$ satisfies own-peak-onliness, efficiency, the equal division guarantee, and  NOM  if and only if it is a simple rule. 
\end{theorem}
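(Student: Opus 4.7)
The ``if'' direction is mostly bookkeeping: own-peak-onliness is built into Definition \ref{def simple}; efficiency follows from Remark \ref{same eff}, since by construction simple rules are same-sided (in excess demand every simple agent receives his peak, which is $\leq \Omega/n$, and every non-simple agent receives an amount in $[\Omega/n, p(R_i)]$, hence $\leq p(R_i)$); NOM is Theorem \ref{t1}; and the equal division guarantee holds because any agent $i$ with $p(R_i) = \Omega/n$ satisfies $i \in N^-(R,\Omega)$, so by Lemma \ref{lema simple} $\varphi_i(R,\Omega)$ lies between $\Omega/n$ and $p(R_i) = \Omega/n$, forcing $\varphi_i(R,\Omega) = \Omega/n$.

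For the converse, let $\varphi$ satisfy the four properties, fix $(R,\Omega) \in \mathcal{E}_\mathcal{SP}$, and assume $z(R,\Omega) \geq 0$ (the excess supply case is symmetric). By Lemma \ref{lema simple} it suffices to show that (i) $\varphi_i(R,\Omega) = p(R_i)$ for each $i \in N^+(R,\Omega)$, and (ii) $\varphi_i(R,\Omega) \in [\Omega/n, p(R_i)]$ for each $i \in N^-(R,\Omega)$. Same-sidedness (Remark \ref{same eff}) already gives $\varphi_i(R,\Omega) \leq p(R_i)$ for every $i \in N$, so only the lower bounds remain. I also record the following consequence of the equal division guarantee and single-peakedness: for any $R_i'$ with $p(R_i') = \Omega/n$ and any $R_{-i}'$, $\varphi_i(R_i', R_{-i}', \Omega) = \Omega/n$, so $O^\varphi(R_i', \Omega) = \{\Omega/n\}$.

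For the non-simple case (ii), suppose toward contradiction that $\varphi_i(R,\Omega) < \Omega/n \leq p(R_i)$, and take any single-peaked $R_i'$ with $p(R_i') = \Omega/n$. Since $\varphi_i(R,\Omega) < \Omega/n \leq p(R_i)$, single-peakedness of $R_i$ gives $\Omega/n \, P_i \, \varphi_i(R,\Omega)$, so $R_i'$ is a manipulation of $\varphi$ at $(R_i, \Omega)$. Taking $x = \varphi_i(R,\Omega) \in O^\varphi(R_i,\Omega)$ and $x' = \Omega/n \in O^\varphi(R_i',\Omega) = \{\Omega/n\}$ shows that the manipulation is obvious, contradicting NOM. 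Combined with the upper bound this yields $\varphi_i(R,\Omega) \in [\Omega/n, p(R_i)]$.

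The simple case (i), where $p(R_i) < \Omega/n$, is the main obstacle: if $\varphi_i(R,\Omega) = a < p(R_i)$, then $a$ and $\Omega/n$ lie on opposite sides of $p(R_i)$, and single-peakedness of the given $R_i$ alone does not order them. The plan is to exploit own-peak-onliness: since $\varphi_i(R,\Omega) = a$ depends only on $p(R_i)$, I may replace $R_i$ by any single-peaked $\widetilde{R}_i$ with $p(\widetilde{R}_i) = p(R_i)$ without changing the allotment. As $a < p(R_i) < \Omega/n$, the relative ranking of $a$ and $\Omega/n$ under a preference with peak $p(R_i)$ is unconstrained by single-peakedness, so I can choose $\widetilde{R}_i$ with $\Omega/n \, \widetilde{P}_i \, a$. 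Then taking $\widetilde{R}_i'$ with $p(\widetilde{R}_i') = \Omega/n$ gives, exactly as in the non-simple case and using the singleton option set $O^\varphi(\widetilde{R}_i', \Omega) = \{\Omega/n\}$, an obvious manipulation of $\varphi$ at $(\widetilde{R}_i, \Omega)$, contradicting NOM. Hence $\varphi_i(R,\Omega) \geq p(R_i)$, which combined with same-sidedness yields $\varphi_i(R,\Omega) = p(R_i)$, completing the verification of Lemma \ref{lema simple}.
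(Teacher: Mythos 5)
Your proof is correct and takes essentially the same route as the paper's: both directions use Lemma \ref{lema simple}, Remark \ref{same eff}, and Theorem \ref{t1} for the ``if'' part, and the converse hinges on the same device of using the equal division guarantee to collapse $O^\varphi(R_i',\Omega)$ to $\{\Omega/n\}$ and own-peak-onliness to select a same-peak preference ranking $\Omega/n$ above the current allotment, yielding an obvious manipulation. Your explicit remark on why own-peak-onliness is indispensable in the simple-agent case (single-peakedness does not order points on opposite sides of the peak) is a point the paper passes over more quickly, but the argument is the same.
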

\begin{proof}   
\noindent $(\Longleftarrow)$ Let $\varphi$ be a simple rule on $\mathcal{E}_\mathcal{SP}$.  Then, $\varphi$ is trivially own-peak-only and satisfies the equal division guarantee. By definition, $\varphi$ satisfies same-sidedness. Therefore, by Remark \ref{same eff}, $\varphi$ is efficient. Lastly, by Theorem \ref{t1},  $\varphi$ satisfies NOM.

\noindent $(\Longrightarrow)$ Let $\varphi$ be an own-peak-only rule defined on $\mathcal{E}_\mathcal{SP}$  that satisfies efficiency, the equal division guarantee, and NOM. Using Lemma \ref{lema simple}, we now prove that $\varphi$ is simple. Let $(R, \Omega) \in \mathcal{E}_{\mathcal{SP}}$ and assume that $z(R, \Omega) \geq 0$. Let $i\in N$. By efficiency, $\varphi_i(R, \Omega)\leq p(R_i)$. There are two cases to consider: 

\begin{enumerate}
    \item[$\boldsymbol{1}.$] \textbf{$\boldsymbol{p(R_i)\leq \frac{\Omega}{n}}$}. We need to show that $\varphi_i(R, \Omega)\geq p(R_i)$ also holds. Assume, by way of contradiction,  that $\varphi_i(R, \Omega)<p(R_i).$ By own-peak-onliness, we can assume that  $R_i$ is such that 
    $\frac{\Omega}{n} \ P_i \ \varphi_i(R, \Omega).$ 
    Let $R_i' \in \mathcal{SP}$ be such that $p(R_i')=\frac{\Omega}{n}$. By the equal division guarantee, $\varphi_i(R_i',R_{-i}, \Omega)=\frac{\Omega}{n}$. Hence,
    \begin{equation}\label{eq4bis}
    \varphi_i(R_i',R_{-i}, \Omega)=\frac{\Omega}{n} \ P_i \ \varphi_i(R, \Omega)
    \end{equation}
    and $R_i'$ is a manipulation of $\varphi$ at $(R_i, \Omega)$. Furthermore, by the equal division guarantee, $O^\varphi(R_i', \Omega)=\left\{\frac{\Omega}{n}\right\}$. As $\varphi_i(R, \Omega) \in O^\varphi(R_i, \Omega),$ by \eqref{eq4bis},  $R_i'$ is an obvious manipulation of $\varphi$ at $(R_i, \Omega)$, contradicting that $\varphi$ is NOM. Thus, $\varphi_i(R, \Omega)\geq p(R_i)$ and, since by hypothesis $\varphi_i(R, \Omega)\leq p(R_i)$, we have $\varphi_i(R, \Omega)= p(R_i)$.
    
    \item[$\boldsymbol{2}.$] \textbf{$\boldsymbol{p(R_i)> \frac{\Omega}{n}}$}. We need to show that $\varphi_i(R, \Omega)\geq \frac{\Omega}{n}$. Assume, by way of contradiction,  that $\varphi_i(R, \Omega)<\frac{\Omega}{n}.$ By single-peakedness and own-peak-onliness, $\frac{\Omega}{n} \ P_i \ \varphi_i(R, \Omega)$ also holds here, and the proof follows an argument similar to that of the previous case. 
    
\end{enumerate}
We have proved that if $p(R_i)\leq \frac{\Omega}{n}$, $\varphi_i(R, \Omega)=p(R_i)$; whereas if $p(R_i)> \frac{\Omega}{n},$  $\varphi_i(R, \Omega)\in [\frac{\Omega}{n},p(R_i)]$. To complete the proof that $\varphi$ is a simple rule, a symmetric argument can be used for the case $z(R, \Omega)<0$. 
\end{proof}


Simple rules only meet a minimal fairness requirement: whenever an agent has equal division as his peak consumption, the rule must assign it to him. Despite this requirement, simple rules can 
fail symmetry. If we want to add symmetry to the picture, then the next result follows.\footnote{As we will see in Subsection \ref{subsection independence}, the equal division guarantee and symmetry are independent axioms.}


\begin{corollary} \label{characterization 2}
 A rule defined on  $\mathcal{E}_\mathcal{SP}$ satisfies own-peak-onliness, efficiency, the equal division guarantee, symmetry, and NOM if and only if it is a symmetric and simple rule. 
\end{corollary}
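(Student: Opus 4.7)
The plan is to derive this corollary as an essentially immediate consequence of Theorem \ref{characterization}. Both directions factor through that theorem, with symmetry carried along as an extra hypothesis that is neither used nor produced by the main characterization.

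For the necessity direction ($\Longrightarrow$), suppose $\varphi$ satisfies own-peak-onliness, efficiency, the equal division guarantee, and NOM (plus symmetry). Dropping symmetry, the remaining four properties match the hypotheses of Theorem \ref{characterization}, which yields that $\varphi$ is a simple rule. Reinstating the symmetry assumption, $\varphi$ is a symmetric simple rule.

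For the sufficiency direction ($\Longleftarrow$), suppose $\varphi$ is a symmetric simple rule. Then Theorem \ref{characterization} (applied in its ``if'' direction) directly delivers own-peak-onliness, efficiency, the equal division guarantee, and NOM, and symmetry is part of the hypothesis.

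Since Theorem \ref{characterization} carries all the technical content, there is no genuine obstacle: the corollary is little more than a bookkeeping statement that symmetry can be cleanly appended to the characterization on both sides. The one substantive point worth flagging, to ensure the statement is not vacuous, is that symmetric simple rules exist --- the uniform rule, identified in Example \ref{example claims rules} as the simple rule induced by the constrained equal awards claims rule, is such a witness.
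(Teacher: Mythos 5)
Your proof is correct and matches the paper's (implicit) argument: the corollary is stated without a separate proof precisely because it follows from Theorem \ref{characterization} by carrying symmetry along as an untouched extra hypothesis on both sides, exactly as you do. The remark about the uniform rule witnessing non-vacuousness is a nice, if unnecessary, addition.
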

\cite{ching1994alternative} shows that if we require strategy-proofness instead of NOM in Corollary \ref{characterization 2}, the unique rule that prevails is the uniform rule. Therefore, Corollary \ref{characterization 2} evidences how the family of characterizing rules enriches when NOM is sufficient to describe the strategic behavior of the agents.
Two interesting members in this family are the simple rules associated with the claims rules of Example \ref{example claims rules}. These rules are compelling examples of rules that can be relevant in some situations but are completely overlooked when the strong incentive compatibility requirement of strategy-proofness is imposed.

\begin{remark} In order to construct a symmetric and simple rule, it is sufficient to consider a symmetric claims rule, i.e., a claims rule that for each problem assigns the same amounts to agents with the same claims. It is straightforward that a simple rule associated with a symmetric claims rule is symmetric. Notice also that symmetric claims rules abound. For example,  $\nu^{CEA},$ $\nu^{CEL}$, and $\nu^{Pro}$ are all symmetric.  
\end{remark} 


\subsection{Independence of axioms}\label{subsection independence}

Next, we present some examples that prove the independence of the axioms in Theorem \ref{characterization} and Corollary \ref{characterization 2}.

\begin{itemize}
    \item Let $\widetilde{\varphi}:\mathcal{E}_\mathcal{SP} \longrightarrow \mathbb{R}_+^n$ be such that, for each $(R,\Omega)\in \mathcal{E}_\mathcal{SP}$ and each $i\in N$, $\varphi_i(R,\Omega)=\frac{\Omega}{n}$ . Then $\widetilde{\varphi}$ satisfies all properties but efficiency. 

\item Let $\varphi^\star:\mathcal{E}_\mathcal{SP} \longrightarrow \mathbb{R}_+^n$ be such that, for each $(R,\Omega)\in \mathcal{E}_\mathcal{SP}$ and each $i\in N$, 
$$\varphi^\star(R, \Omega)=\left\{\begin{array}{l l }
(p(R_1), p(R_2), 0, \ldots,0)) & \text{if } p(R_1)+p(R_2)=\Omega \text{ and } \\ & p(R_j) \notin \{p(R_1),p(R_2)\} \text{ for each }j \in N \setminus\{1,2\} 
\\
u(R, \Omega) & \text{otherwise}\\
\end{array}\right.$$

Observe that this rule does meet 
the equal division guarantee. It is clear that $\varphi^\star$ is own-peak-only, efficient, and symmetric. Now, we prove that $\varphi^\star$ also satisfies NOM. Let  $i\in N$ and  $(R_i,\Omega)\in \mathcal{SP}\times \mathbb{R}_+$. Notice that  $O^{\varphi^\star}(R_i,\Omega)=O^u(R_i,\Omega)\cup \{0\}$ if  $i \in N \setminus\{1,2\}$ and $O^{\varphi^\star}(R_i,\Omega)=O^u(R_i,\Omega)$ if $i \in \{1,2\}$. Then, the fact that $\varphi^\star$ does not have obvious manipulations for agents in $N\setminus\{1,2\}$ follows a similar argument as in Theorem \ref{t1} using the fact that $u$ is a simple rule. Then, $\varphi^\star$ satisfies all properties but the equal division guarantee.

\item Let $\overline{\varphi}:\mathcal{E}_\mathcal{SP} \longrightarrow \mathbb{R}_+^n$ be such that, for each $(R,\Omega)\in \mathcal{E}_\mathcal{SP}$,   
$$\overline{\varphi}(R, \Omega)=\left\{\begin{array}{l l }
(\frac{1}{3}\Omega, \frac{2}{3}\Omega, 0, \ldots,0) & \text{if } p(R_1)=p(R_2)=\Omega \text{ and } \\ & p(R_j) = 0 \text{ for each }j \in N \setminus\{1,2\} \\
u(R, \Omega) & \text{otherwise}\\
\end{array}\right.$$

Observe that $\overline{\varphi}$  is simple. Thus,  it satisfies own-peak-onliness, efficiency, the equal division guarantee, and NOM. However, it does not satisfy symmetry.

\item Let $\widehat{\varphi}:\mathcal{E}_\mathcal{SP} \longrightarrow \mathbb{R}_+^n$ be such that, for each $(R,\Omega)\in \mathcal{E}_\mathcal{SP}$,  
\begin{enumerate}[(i)]
    \item if $z(R,\Omega)\geq 0, \widehat{\varphi}(R, \Omega)=u(R,\Omega)$.
     \item if $z(R,\Omega)< 0$, let $\widehat{N}=\{i\in N: p(R_i)=\min_{j\in N}\{p(R_j)\}\}$. Then, for each $i\in N$,

$$\widehat{\varphi}_i(R, \Omega)=\left\{\begin{array}{l l }
p(R_i) & \text{if } i \in N\setminus \widehat{N} 
\\
\lambda & \text{if } i \in \widehat{N}\\
\end{array}\right.$$
where $\lambda \geq 0$ and solves $\sum_{j \in N}\widehat{\varphi}_j(R, \Omega)=\Omega$.
\end{enumerate}

Observe that $\widehat{\varphi}$  is not simple because it could be that $p(R_i)<\frac{\Omega}{n}<\widehat{\varphi}_i(R)$ for some $i\in \widehat{N}$. It is easy to see that $\widehat{\varphi}$ is own-peak-only, efficient, and symmetric. If $z(R,\Omega)< 0$ and $p(R_i)=\frac{\Omega}{n}$, then $i \in N\setminus \widehat{N}$. Therefore, $\widehat{\varphi}$ satisfies the equal division guarantee. This implies, by Theorem \ref{characterization},  that $\widehat{\varphi}$ satisfies all properties but NOM. 

\item  Let $\underline{\varphi}:\mathcal{E}_\mathcal{SP} \longrightarrow \mathbb{R}_+^n$ be such that, for each $(R,\Omega)\in \mathcal{E}_\mathcal{SP}$ and each $i\in N$,  
$$\underline{\varphi}_i(R, \Omega)=\left\{\begin{array}{l l }
u_i(R^0_1,R_{-1},\Omega) & \text{if } z(R_{-1}, \Omega)\geq 0,  \ 0P_1\frac{\Omega}{n}, \text{ and } \\ & p(R_1)<p(R_j) \text{ for each }j \in N \setminus\{1\} \\
u_i(R, \Omega) & \text{otherwise}\\
\end{array}\right.$$
where $R^0_1\in \mathcal{SP}$ is such that $p(R^0_1)=0$.

Observe that $\underline{\varphi}$ is not  simple because it is not own-peak-only and it could be that $\varphi_1(R, \Omega)<p(R_1)<\frac{\Omega}{n}$. It is clear that $\underline{\varphi}$ satisfies the equal division guarantee and symmetry. To see that $\underline{\varphi}$ is efficient, let $(R,\Omega)\in \mathcal{E}_\mathcal{SP}$ be such that  $z(R_{-1}, \Omega)\geq 0$, $0P_1\frac{\Omega}{n}$, and $p(R_1)<p(R_j) \text{ for each }j \in N \setminus\{1\}$. Then, by the definition of $u$ and the fact that $u$ satisfies same-sideness, $\underline{\varphi}_1(R, \Omega)=0\leq p(R_1)$ and $\varphi_i(R, \Omega)\leq p(R_i)$ for each $i\in N\setminus\{1\}$. Then, $\underline{\varphi}_i(R, \Omega)\leq p(R_i)$  for each $i\in N$. Since in any other case $\underline{\varphi}$ operates as $u$ and $u$  satisfies same-sideness, $\underline{\varphi}$ also satisfies this property. Thus, by Remark \ref{same eff}, $\underline{\varphi}$ is efficient. Now, we prove that $\underline{\varphi}$ satisfies NOM. For each $i\in N\setminus\{1\}$ and each $(R_i,\Omega)\in \mathcal{SP}\times \mathbb{R}_+$,  $O^{\underline{\varphi}}(R_i,\Omega)=O^u(R_i,\Omega)$. Then, the fact that $\underline{\varphi}$ does not have obvious manipulations for agents in $N\setminus\{1\}$ follows the same arguments used in the proof of  Theorem \ref{t1} and the fact that $u$ is a simple rule. Let us next consider agent $1$. First, let $(R_1,\Omega)\in \mathcal{SP}\times \mathbb{R}_+$ be  such that $\frac{\Omega}{n}R_10$. Then, $O^{\underline{\varphi}}(R_1,\Omega)=O^u(R_1,\Omega)$.  Again, the fact that $\underline{\varphi}$ does not have obvious manipulations for agent $1$ at $(R_1,\Omega)$ follows the same arguments used in the proof of  Theorem \ref{t1} together with the facts that $u$ is a simple rule and that $\frac{\Omega}{n}\in O^\varphi(R'_1,\Omega)$ for each $R'_1 \in \mathcal{SP}$. Second, let $(R_1,\Omega)\in \mathcal{SP}\times \mathbb{R}_+$ be such that $0P_1\frac{\omega}{n}$. Then, by single-peakedness of $R_1$, $p(R_1)<\frac{\Omega}{n}.$ By definition of $\underline{\varphi}$,  $O^{\underline{\varphi}}(R_1,\Omega)=O^u(R^0_1,\Omega)=[0,\frac{\Omega}{n}]$. Using  single-peakedness again and the fact that $0P_1\frac{\Omega}{n}$, it follows that  $xR_1\frac{\Omega}{n}$ for each $x\in O^\varphi(R_1,\Omega)$. Furthermore, as $\frac{\Omega}{n}\in O^\varphi(R'_1,\Omega)$ for each $R'_1 \in \mathcal{SP}$, agent $1$ does not have  obvious manipulations of $\underline{\varphi}$ at $(R_1,\Omega).$
Therefore, $\underline{\varphi}$ is NOM. We conclude that $\underline{\varphi}$ satisfies all properties but own-peak-onliness. 
    
\end{itemize}
    

\section{Further characterizations involving NOM}\label{section further}

Insisting on 
the NOM property, in this section we provide new characterizations of rules 
in which 
NOM together with other relevant properties are invoked. These considerations lead to two interesting variants of Theorem \ref{characterization}. 
The first one is an adaptation of Theorem \ref{characterization} 
to economies with individual endowments and the second one involves a responsiveness condition concerning the peaks of the preference profile. 


\subsection{Endowments guarantee}

In some situations,  it is more appropriate to assume that instead of a social endowment $\Omega \in \mathbb{R}_+$, there is a profile  $\omega=(\omega_i)_{i \in N} \in \mathbb{R}_+^n$  where, for each $i \in N$, $\omega_i$ denotes agent $i$'s individual endowment of the non-disposable commodity and  $\sum_{j \in N}\omega_j=\Omega$.  
In such cases, it is natural to replace the equal division guarantee with the following property.  

\vspace{5 pt}
\noindent
\textbf{Endowments guarantee:} For each   $(R,\omega) \in \mathcal{E}_{\mathcal{D}}$ and each $i \in N$ such that $\omega_i\in p(R_i)$, we have $\varphi_i(R)I_i\omega_i$.
\vspace{5 pt}

\noindent If, for each $i \in N$, we replace $\frac{\Omega}{n}$ with $\omega_i$ in the definition of simple agent, we can define, following Lemma \ref{lema simple}, a \textbf{simple reallocation rule} as one that assigns: (i) to each simple agent, his peak amount and, (ii) to each remaining agent, an amount between his endowment and his peak.  Then, for economies with individual endowments, the following variant of Theorem \ref{characterization} is obtained.

\begin{proposition} \label{characterization bis}
 A rule defined on  $\mathcal{E}_\mathcal{SP}$ satisfies own-peak-onliness, efficiency, the endowments guarantee, and NOM  if and only if it is a simple reallocation rule. 
\end{proposition}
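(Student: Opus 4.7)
The plan is to mirror the proof of Theorem \ref{characterization}, replacing the equal-division benchmark $\frac{\Omega}{n}$ by the individual endowment $\omega_i$ throughout. I would first establish an endowment-analogue of Lemma \ref{lema simple}: an own-peak-only rule on $\mathcal{E}_\mathcal{SP}$ is a simple reallocation rule if and only if $\varphi_i(R, \omega) = p(R_i)$ for every simple agent and $\varphi_i(R, \omega)$ lies between $\omega_i$ and $p(R_i)$ for every non-simple agent. The argument copies the one of Lemma \ref{lema simple} with $\omega_i$ in place of $\frac{\Omega}{n}$; feasibility of the auxiliary claims problem among non-simple agents uses $\sum_j \omega_j = \Omega$ and the fact that each non-simple agent's peak is on the opposite side of $\omega_i$ relative to the sign of $z(R, \omega)$.

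For the ($\Longleftarrow$) direction, own-peak-onliness, same-sidedness (hence efficiency by Remark \ref{same eff}), and the endowments guarantee follow from this Lemma analogue; the latter holds because $p(R_i) = \omega_i$ forces $i$ to be non-simple and so $\varphi_i \in [\omega_i, \omega_i] = \{\omega_i\}$. For NOM I would mimic Theorem \ref{t1}: I show $\omega_i \in O^\varphi(R_i, \omega)$ for every $R_i \in \mathcal{SP}$ by picking $R_{-i}$ with $p(R_j) = \omega_j$ for each $j \neq i$, which renders all $j \neq i$ non-simple and forces $\varphi_j(R, \omega) = \omega_j$ and thus $\varphi_i(R, \omega) = \omega_i$ by feasibility. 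Since every $x \in O^\varphi(R_i, \omega)$ lies between $\omega_i$ and $p(R_i)$ by the Lemma analogue, single-peakedness gives $x R_i \omega_i$ for every such $x$. Hence if $R_i'$ is a manipulation, picking $\omega_i \in O^\varphi(R_i', \omega)$ shows that no $x \in O^\varphi(R_i, \omega)$ satisfies $\omega_i P_i x$, so $R_i'$ cannot be obvious.

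For the ($\Longrightarrow$) direction I replicate the two-case analysis of Theorem \ref{characterization}. Fix $(R, \omega)$ with $z(R, \omega) \geq 0$ (the case $z(R, \omega) < 0$ is symmetric) and $i \in N$; same-sidedness yields $\varphi_i(R, \omega) \leq p(R_i)$. If $p(R_i) \leq \omega_i$, suppose toward contradiction $\varphi_i < p(R_i)$; by own-peak-onliness I may choose a single-peaked representative of agent $i$ with $\omega_i P_i \varphi_i$ (feasible since $\varphi_i$ lies strictly below the peak and $\omega_i$ lies at or above it, so preferences across the peak can be tuned freely), let $R_i'$ satisfy $p(R_i') = \omega_i$, and note that the endowments guarantee forces $O^\varphi(R_i', \omega) = \{\omega_i\}$; then $R_i'$ is an obvious manipulation, a contradiction. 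If $p(R_i) > \omega_i$ and $\varphi_i < \omega_i$, single-peakedness yields $\omega_i P_i \varphi_i$ directly, and the same construction produces an obvious manipulation. Combining these cases gives conditions (i)--(ii) of the Lemma analogue, concluding that $\varphi$ is a simple reallocation rule.

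The main obstacle I anticipate is the step showing that $\omega_i$ lies in the option set of every $R_i$. In Theorem \ref{t1} this was delivered by a symmetric assignment of other agents' peaks around $\frac{\Omega}{n}$, a construction no longer directly available with heterogeneous endowments. The resolution is to exploit the strict inequality in the definition of simplicity: setting $p(R_j) = \omega_j$ for $j \neq i$ makes every other agent non-simple regardless of the sign of $z(R, \omega)$, so the Lemma analogue pins their allotments to $\omega_j$ and feasibility delivers $\omega_i$ for agent $i$. Aside from this construction, every step carries over essentially verbatim from Theorem \ref{characterization}.
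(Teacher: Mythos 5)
Your proof is correct and follows essentially the same route the paper intends: the paper states Proposition~\ref{characterization bis} without a written proof, treating it as a verbatim adaptation of Lemma~\ref{lema simple}, Theorem~\ref{t1}, and Theorem~\ref{characterization} with $\frac{\Omega}{n}$ replaced by $\omega_i$, which is exactly what you carry out. The one place where the adaptation is not literal --- establishing $\omega_i \in O^\varphi(R_i',\omega)$ for every $R_i'$ --- you resolve correctly by setting $p(R_j)=\omega_j$ for all $j\neq i$, which makes every other agent non-simple and pins their allotments to their endowments, and you rightly observe that only containment of $O^\varphi(R_i,\omega)$ in the interval between $\omega_i$ and $p(R_i)$ (not equality with it) is needed for the NOM argument.
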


The analysis of the independence of axioms involved is similar to that of Theorem \ref{characterization} (see Subsection \ref{subsection independence}).


\subsection{Peak responsiveness}

The following responsiveness property says that if the peak of one agent is greater than or equal to the peak of another agent, then the first agent must be assigned an amount greater than or equal to that of the second agent. 


\vspace{5 pt}
\noindent
\textbf{Peak responsiveness:} For each  $(R, \Omega) \in \mathcal{E}_\mathcal{SP}$ and each $\{i,j\} \subseteq N$ with $i\neq j$, $p(R_i) \leq p(R_j)$ implies  $\varphi_i(R, \Omega)\leq \varphi_j(R, \Omega).$
\vspace{5 pt}


\noindent It is clear that peak responsiveness implies symmetry.

Furthermore, we next prove that peak responsiveness together with NOM imply the equal division guarantee for an own-peak-only rule defined on $\mathcal{E}_\mathcal{SP}$.

\begin{lemma} \label{PMEDG}
Any rule defined on $\mathcal{E}_\mathcal{SP}$ that satisfies own-peak-onliness, peak responsiveness, and NOM meets the equal division guarantee.
    \end{lemma}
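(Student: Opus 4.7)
My plan is to argue by contradiction, leveraging own-peak-onliness to freely reshape agent $i$'s preference across its peak and peak responsiveness to tightly bound the option set of a suitable alternative preference. Suppose $\varphi$ satisfies the three hypothesized properties, and that for some $(R,\Omega)\in\mathcal{E}_{\mathcal{SP}}$ and some $i \in N$ we have $p(R_i)=\Omega/n$ but $\varphi_i(R,\Omega)=y \neq \Omega/n$. I first treat the case $y > \Omega/n$; the case $y<\Omega/n$ is symmetric.

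The core construction is to replace $R_i$ with a highly asymmetric single-peaked preference $R_i^\ast \in \mathcal{SP}$ with the same peak $\Omega/n$ but with $0 \, P_i^\ast \, y$. Since single-peakedness does not constrain comparisons between points on opposite sides of the peak, such $R_i^\ast$ exists, and own-peak-onliness gives $\varphi_i(R_i^\ast, R_{-i}, \Omega)=y$; in particular $y \in O^\varphi(R_i^\ast,\Omega)$. Next, pick $R_i' \in \mathcal{SP}$ with $p(R_i')=0$. For any $R_{-i}' \in \mathcal{SP}^{n-1}$ and every $j\neq i$, $p(R_i')=0 \leq p(R_j')$, so peak responsiveness gives $\varphi_i(R_i', R_{-i}',\Omega) \leq \varphi_j(R_i', R_{-i}',\Omega)$, and feasibility then yields $\varphi_i(R_i', R_{-i}',\Omega) \leq \Omega/n$; hence $O^\varphi(R_i',\Omega) \subseteq [0, \Omega/n]$. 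For every $x' \in [0, \Omega/n]$, single-peakedness of $R_i^\ast$ on $[0,\Omega/n]$ gives $x' \, R_i^\ast\, 0$, which combined with $0 \, P_i^\ast\, y$ yields $x' \, P_i^\ast\, y$. Consequently $R_i'$ is an obvious manipulation of $\varphi$ at $(R_i^\ast, \Omega)$ (take $x=y$ against every $x' \in O^\varphi(R_i',\Omega)$ in Definition \ref{def OM}), contradicting NOM.

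The case $y<\Omega/n$ is handled symmetrically by choosing $R_i^\ast$ with $\Omega \, P_i^\ast \, y$ and $R_i'$ monotone increasing (i.e.\ $p(R_i')=\infty$); peak responsiveness then forces $O^\varphi(R_i',\Omega) \subseteq [\Omega/n, \Omega]$, and single-peakedness of $R_i^\ast$ on $[\Omega/n,\Omega]$ together with $\Omega\,P_i^\ast\,y$ gives $x'\,P_i^\ast\,y$ for every $x' \in [\Omega/n,\Omega]$, producing the analogous contradiction. I do not foresee a major obstacle; the critical observation is that own-peak-onliness frees us, within the contradiction hypothesis, to replace the across-peak portion of agent $i$'s preference with any single-peaked extension, so that peak responsiveness alone restricts the option set of the low- (or high-) peak alternative preference to an entire half-interval of feasible values that is uniformly preferred to $y$.
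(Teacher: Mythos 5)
Your proof is correct and follows essentially the same route as the paper's: argue by contradiction, report an extreme peak ($0$ or $\infty$) so that peak responsiveness plus feasibility confines the misreport's option set to one side of $\Omega/n$, and use own-peak-onliness to reshape the true preference (your $R_i^\ast$, the paper's ``we can assume $R_i$ is such that\dots'') so that the entire option set is strictly preferred to the amount received, yielding an obvious manipulation. Your explicit anchoring of the reshaped preference via $0\,P_i^\ast\,y$ (resp.\ $\Omega\,P_i^\ast\,y$) is just a cleaner justification of the same step.
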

\begin{proof}
    Let $\varphi$ be an own-peak-only rule defined on $\mathcal{E}_\mathcal{SP}$ that satisfies peak responsiveness and NOM. Assume, on contradiction, that there are $(R, \Omega) \in \mathcal{E}_{\mathcal{SP}}$ and  $i\in N$  such that $p(R_i)=\frac{\Omega}{n}$ and $\varphi_i(R, \Omega)\neq\frac{\Omega}{n}$. Consider the case  $\varphi_i(R, \Omega)<\frac{\Omega}{n}$ (the other one is symmetric). Let $R_i'\in \mathcal{SP}$ be such that $p(R'_i)=\infty$. By peak responsiveness, for each $\widetilde{R}_{-i} \in \mathcal{SP}^{n-1}$ we have that $\varphi_i(R'_i,\widetilde{R}_{-i}, \Omega)\geq \varphi_j(R'_i,\widetilde{R}_{-i}, \Omega)$ for each $j\in N\setminus\{i\}$. Then, by feasibility, $\varphi_i(R'_i,R_{-i}, \Omega)\geq \frac{\Omega}{n}$. By own-peak-onliness, we can assume that $R_i$ is such that   $\varphi_i(R'_i,R_{-i}, \Omega)P_i\varphi_i(R, \Omega)$. Then, $R'_i$ is a manipulation of $\varphi$ at $(R_i,\Omega)$. Furthermore, if $x\in O^\varphi(R'_i,\Omega)$, then  $x=\varphi_i(R'_i,R'_{-i}, \Omega)$ for some $R'_{-i} \in \mathcal{SP}^{n-1}$. Then, by peak responsiveness and feasibility, $x\geq \frac{\Omega}{n}$. Therefore, by own-peak-onliness, we can assume that $xR_i\varphi_i(R, \Omega)$. Therefore, $R'_i$ is an obvious manipulation of $\varphi$ at $(R,\Omega)$, which is a contradiction.
\end{proof} 

Hence, under own-peak-onliness and NOM, peak responsiveness implies the fairness properties of symmetry and the equal division guarantee.\footnote{Remember that, as we saw in Section \ref{simple},  symmetry and the equal division guarantee are independent properties.} By using Lemma \ref{PMEDG} and a proof similar to that of Theorem \ref{characterization}, the following characterization is obtained.

\begin{proposition} \label{characterization3}
 A rule defined on  $\mathcal{E}_\mathcal{SP}$ satisfies own-peak-onliness, efficiency, peak responsiveness, and NOM  if and only if it is a peak responsive and simple rule.\footnote{Notice that the equal division guarantee is never invoked in this proposition.} 
\end{proposition}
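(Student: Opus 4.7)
The plan is to leverage the two ingredients already in place, namely Theorem \ref{characterization} (which characterizes simple rules via own-peak-onliness, efficiency, the equal division guarantee, and NOM) and Lemma \ref{PMEDG} (which shows that peak responsiveness plus own-peak-onliness and NOM implies the equal division guarantee). With these, the proposition follows essentially by substitution.

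For the $(\Longleftarrow)$ direction, I would let $\varphi$ be a peak responsive and simple rule defined on $\mathcal{E}_\mathcal{SP}$. Peak responsiveness holds by assumption. For the remaining three properties, I would invoke the easy direction of Theorem \ref{characterization}: since $\varphi$ is simple, it is own-peak-only (by definition), efficient (since simple rules are same-sided, hence by Remark \ref{same eff} efficient), and NOM (by Theorem \ref{t1}). No additional work is required here.

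For the $(\Longrightarrow)$ direction, I would let $\varphi$ satisfy own-peak-onliness, efficiency, peak responsiveness, and NOM. Applying Lemma \ref{PMEDG} immediately gives that $\varphi$ meets the equal division guarantee. Therefore $\varphi$ satisfies own-peak-onliness, efficiency, the equal division guarantee, and NOM, so by the nontrivial direction of Theorem \ref{characterization} it is simple. Combined with the hypothesis of peak responsiveness, this shows $\varphi$ is a peak responsive and simple rule, completing the proof.

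There is no genuine obstacle here; the entire work has been pushed into Lemma \ref{PMEDG} and Theorem \ref{characterization}. The only subtlety worth flagging explicitly in the write-up is that the statement of the proposition does \emph{not} mention the equal division guarantee, yet it is precisely this property (extracted from peak responsiveness by Lemma \ref{PMEDG}) that bridges the hypotheses to Theorem \ref{characterization}. This is the content of the footnote in the proposition and should be the single remark accompanying the short proof.
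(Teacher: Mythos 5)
Your proposal is correct and follows essentially the same route as the paper, which likewise obtains the result from Lemma \ref{PMEDG} combined with Theorem \ref{characterization}. Your only (harmless, and arguably cleaner) deviation is that you invoke Theorem \ref{characterization} as a black box once the equal division guarantee is extracted, whereas the paper phrases this as ``a proof similar to that of Theorem \ref{characterization}''; since all four hypotheses of that theorem are then satisfied, the direct citation is fully justified.
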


\noindent Let us check the independence of axioms involved in Proposition \ref{characterization3}. Considering the rules defined in Subsection \ref{subsection independence}, the equal division rule $\widetilde{\varphi}$  satisfies all properties but efficiency; by Lemma \ref{PMEDG}, rule $\varphi^\star$ satisfies all properties but peak responsiveness; and rule $\underline{\varphi}$ satisfies all properties but own-peak-onliness.   
Lastly, the Constrained equal-distance rule is own-peak-only by definition; it is same-sided, and therefore efficient by Remark \ref{same eff}; and it is also peak responsive. However, since it may not assign the peak amount to some simple agent, it is not NOM.

\begin{remark} In order to construct a peak responsive and simple rule, it is sufficient to consider a responsive claims rule. A claims rule $\nu$ is \textbf{responsive} if for each problem $(c,E)$ and each $\{i,j\} \subset N$ with $i \neq j$, $c_i \leq c_j$ implies $\nu_i(c,E) \leq \nu_j(c,E)$. To see that the simple rule $\varphi$ associated with a responsive claims rule $\nu$ is peak responsive, let $(R, \Omega) \in \mathcal{E}^N$ and $\{i,j\} \subseteq N$ with $i\neq j$ and $p(R_i) \leq p(R_j)$. 
Consider first the case  $z(E,\Omega) < 0$. If one or both agents in $\{i,j\}$ are simple, then clearly $\varphi_i(R, \Omega) \leq \varphi_j(R,\Omega)$. If both are non-simple, then $c_i(R, \Omega)=\frac{\Omega}{n} - p(R_i) \geq \frac{\Omega}{n} - p(R_j)= c_j(R, \Omega)$ and therefore $\nu_i(R, \Omega) \geq \nu_j(R, \Omega)$. Thus, by Definition \ref{def simple}, $\varphi_i(R, \Omega) \leq \varphi_j(R, \Omega)$. The case $z(E,\Omega) \geq 0$ is even more straightforward and we omit it. Notice that all claims rules in Example \ref{example claims rules} are responsive, and so their associated simple rules are peak responsive.





    
\end{remark}

\section{Maximal domain for NOM}\label{section maximal}

Now, a relevant question that arises is whether the domain of preferences can be enlarged maintaining the compatibility of the properties. \cite{ching1998maximal} show that the single-plateaued domain  is maximal for efficiency, symmetry, and strategy-proofness. As we have seen, when we weaken strategy-proofness to NOM (and explicitly invoke own-peak-onliness), the class of rules that also meet the equal division guarantee enlarges considerably. One may suspect that the maximal domain of preferences for these new properties enlarges as well. However, as we will see in this section, the single-plateaued domain is still maximal for these properties.\footnote{\cite{ching1998maximal} show that the single-plateaued domain is \emph{the} unique maximal domain for their properties. We show that this domain is \emph{one} maximal domain for our properties. There could exist ``pathological'' maximal domains containing just a portion of the single-plateaued domain. Nevertheless, all of them have to be contained in the domain of convex preferences.}    




We start with the definition of maximality for a list of properties.

\begin{definition}\label{def max}
A domain $\mathcal{D}^{\star } \subseteq \mathcal{U}$ is \textbf{maximal for a list of properties} if (i) $\mathcal{SP} \subseteq \mathcal{D}^{\star}$, (ii) there is a rule defined on $\mathcal{E}_{\mathcal{D}^{\star}}$ satisfying the properties, and (iii) for each $\mathcal{E}_{\overline{\mathcal{D}}}$ with $\mathcal{D}^\star \subsetneq \overline{\mathcal{D}} \subseteq \mathcal{U}$  there is no rule defined on  $\mathcal{E}_{\overline{\mathcal{D}}}$ satisfying the same properties.
\end{definition}

Given $i \in N$ and $R_i\in \mathcal{U}$,  let $\underline{p}(R_i)=\inf p(R_i)$ and $\overline{p}(R_i)=\sup p(R_i).$
Agent $i$'s preference $R_i \in \mathcal{U}$ is \textbf{single-plateaued} if $p(R_i)$ is the interval $[ \underline{p}(R_i), \overline{p}(R_i)]$ and, for each pair $\{x_i, x_i'\} \subseteq \mathbb{R}_+$, we have $x_iP_ix_i'$ as long as either $x_i'<x_i\leq \underline{p}(R_i)$ or $\overline{p}(R_i) \leq x_i<x_i'$ holds. Denote by $\mathcal{SPL}$ the domain of all such preferences.  Agent $i$'s preference $R_i \in \mathcal{U}$ is \textbf{convex} if $p(R_i)$ is the interval $[\underline{p}(R_i), \overline{p}(R_i)]$ and, for each pair $\{x_i, x_i'\} \subseteq \mathbb{R}_+$, we have $x_iR_ix_i'$ as long as either $x_i'<x_i\leq \underline{p}(R_i)$ or $\overline{p}(R_i) \leq x_i<x_i'$ holds. Denote by $\mathcal{C}$ the domain of all such preferences.


\begin{lemma} \label{between}
Let $\mathcal{D}$ be such that $\mathcal{SP}\subseteq \mathcal{D} \subseteq \mathcal{U}$ and let $\varphi$ be an own-peak-only rule defined on $\mathcal{E}_\mathcal{D}$ that satisfies the equal division guarantee and NOM. Let $i \in N$ and $(R, \Omega) \in \mathcal{E}_\mathcal{D}$. If $p(R_i)$ is a singleton, then $\varphi_i(R, \Omega)$ is between $p(R_i)$ and  $\frac{\Omega}{n}$.   
\end{lemma}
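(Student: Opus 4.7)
My approach is to adapt the contradiction argument from the $(\Longrightarrow)$ direction of Theorem \ref{characterization} to the enlarged domain $\mathcal{D}$. Write $p := p(R_i)$ for the singleton peak. The case $p = \Omega/n$ is immediate: the equal division guarantee gives $\varphi_i(R,\Omega)\, I_i\, \Omega/n$, and since $\Omega/n$ is the unique $R_i$-best element, any element indifferent to it must itself lie in $p(R_i)$, forcing $\varphi_i(R,\Omega) = \Omega/n$.

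Next I would handle the case $p < \Omega/n$ (the case $p > \Omega/n$ is symmetric). Suppose, toward a contradiction, that $\varphi_i(R,\Omega) \notin [p,\Omega/n]$, so either $\varphi_i(R,\Omega) > \Omega/n$ or $\varphi_i(R,\Omega) < p$. The plan is to select an auxiliary single-peaked preference $R_i^* \in \mathcal{SP} \subseteq \mathcal{D}$ with $p(R_i^*) = p$ and $\Omega/n\, P_i^*\, \varphi_i(R,\Omega)$, where $P_i^*$ denotes the strict part of $R_i^*$. When $\varphi_i(R,\Omega) > \Omega/n$ both alternatives sit above $p$ and $\Omega/n$ is strictly closer, so any $R_i^* \in \mathcal{SP}$ with peak $p$ works; when $\varphi_i(R,\Omega) < p$ the alternatives straddle the peak, and one constructs $R_i^*$ with sufficiently asymmetric slopes around $p$. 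By own-peak-onliness, $\varphi_i(R_i^*, R_{-i}, \Omega) = \varphi_i(R,\Omega)$. Now I would fix $R_i' \in \mathcal{SP}$ with $p(R_i') = \Omega/n$: the equal division guarantee, combined with the fact that $R_i'$ is strictly single-peaked at $\Omega/n$, forces $\varphi_i(R_i', R_{-i}', \Omega) = \Omega/n$ for every $R_{-i}' \in \mathcal{D}^{n-1}$, and hence $O^\varphi(R_i', \Omega) = \{\Omega/n\}$. Consequently, $R_i'$ manipulates $\varphi$ at $(R_i^*, \Omega)$ because $\Omega/n\, P_i^*\, \varphi_i(R_i^*, R_{-i}, \Omega)$, and since the unique element $\Omega/n$ of $O^\varphi(R_i', \Omega)$ is $P_i^*$-preferred to the witness $\varphi_i(R,\Omega) \in O^\varphi(R_i^*, \Omega)$, this manipulation is obvious, contradicting NOM.

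The step that will require the most care is the construction of $R_i^*$ in the sub-case $\varphi_i(R,\Omega) < p < \Omega/n$, where the relevant alternatives lie on opposite sides of $p$ and single-peakedness does not by itself pin down their comparison. The construction is nevertheless routine once one exploits the freedom that single-peaked preferences offer across the peak: a preference whose right branch near $p$ is sufficiently flat (equivalently, whose left branch is sufficiently steep) yields $\Omega/n\, P_i^*\, \varphi_i(R,\Omega)$. Everything else is a direct transcription of the NOM contradiction used in Theorem \ref{characterization}, now invoking the containment $\mathcal{SP} \subseteq \mathcal{D}$ to ensure that the auxiliary preferences $R_i^*$ and $R_i'$ belong to the admissible domain.
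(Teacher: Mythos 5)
Your proof is correct and follows essentially the same route as the paper's: the paper likewise reduces to a single-peaked preference with the same peak via own-peak-onliness (citing Case 1 of the proof of Theorem \ref{characterization} for the bound $\varphi_i(R,\Omega)\geq p(R_i)$), then deviates to a single-peaked preference with peak $\frac{\Omega}{n}$ so that the equal division guarantee collapses the option set to $\{\frac{\Omega}{n}\}$ and yields an obvious manipulation contradicting NOM. Your separate treatment of the straddling sub-case $\varphi_i(R,\Omega)<p<\frac{\Omega}{n}$ is just an explicit spelling-out of the paper's ``we can assume $R_i$ is such that $\frac{\Omega}{n}\,P_i\,\varphi_i(R,\Omega)$'' step.
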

\begin{proof}
Let $\varphi:\mathcal{E}_\mathcal{D} \longrightarrow \mathbb{R}_+$ be an own-peak-only rule that satisfies the equal division guarantee and NOM.  Let $(R, \Omega) \in\mathcal{E}_\mathcal{D}$ and  let $i\in N$ be such that $p(R_i)\leq \frac{\Omega}{n} $. We will prove that $\varphi_i(R, \Omega) \in [p(R_i), \frac{\Omega}{n}].$ By own-peak-onliness and the fact that $\mathcal{SP}\subseteq \mathcal{D}$, the proof that  $\varphi_i(R, \Omega)\geq p(R_i)$ follows the same lines as Case 1 in  part  ($\Longrightarrow$) of the proof of Theorem  \ref{characterization}. 
Next, we need to show that $\varphi_i(R, \Omega)\leq \frac{\Omega}{n}$. Assume, by way of contradiction,  that $\frac{\Omega}{n}< \varphi_i(R, \Omega)$. By own-peak-onliness and the fact that $\mathcal{SP}\subseteq \mathcal{D}$, we can assume w.l.o.g. that $\frac{\Omega}{n} \ P_i \ \varphi_i(R, \Omega)$. Let $R'_i \in \mathcal{SP}$ be such that $p(R'_i ) =\frac{\Omega}{n}.$ By the equal division guarantee, $R'_i$ is an obvious manipulation of $\varphi$ at $(R_i,\Omega)$. This contradicts the fact that $\varphi$ satisfies NOM. Therefore, $\varphi_i(R, \Omega) \in [p(R_i), \frac{\Omega}{n}].$ 

The case $\frac{\Omega}{n} > p(R_i)$ is symmetric, so we omit it. \end{proof}

\begin{lemma} \label{convex}
Let $\mathcal{D}$ be such that $\mathcal{SP}\subseteq \mathcal{D} \subseteq \mathcal{U}$. If there is an own peak-only rule defined on $\mathcal{E}_\mathcal{D}$ that satisfies efficiency, the equal division guarantee, and NOM, then $\mathcal{D} \subseteq \mathcal{C}$. 
\end{lemma}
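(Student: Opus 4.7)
The plan is a proof by contradiction: suppose some $R_i\in\mathcal{D}$ is not convex and construct an obvious manipulation, contradicting NOM. From non-convexity one extracts a pair $x'<x$ with $x'P_ix$ such that $x$ lies weakly below some peak point of $R_i$ (the symmetric case where $x$ lies weakly above a peak point is analogous). I focus on the conceptually clean case in which $p(R_i)=\{p\}$ is a singleton, so $x'<x\leq p$; the non-singleton-peak cases follow the same template once a suitable peak point is selected.

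Set $\Omega=nx'$, so $\Omega/n=x'$, and let $R_i'\in\mathcal{SP}$ with $p(R_i')=x'$. By the equal division guarantee applied to $R_i'$, $\varphi_i(R_i',R_{-i},\Omega)=x'$ for every $R_{-i}\in\mathcal{D}^{n-1}$, and hence $O^\varphi(R_i',\Omega)=\{x'\}$. The next step is to prove $x\in O^\varphi(R_i,\Omega)$. Pick $R_i^{SP}\in\mathcal{SP}$ with $p(R_i^{SP})=p$; since $p(R_i)=p(R_i^{SP})=\{p\}$, own-peak-onliness yields $O^\varphi(R_i,\Omega)=O^\varphi(R_i^{SP},\Omega)$. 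The restriction of $\varphi$ to $\mathcal{E}_{\mathcal{SP}}$ still satisfies all four hypotheses, so by Theorem \ref{characterization} it is a simple rule on $\mathcal{E}_{\mathcal{SP}}$. The Claim inside the proof of Theorem \ref{t1} then shows that varying $R_{-i}$ over $\mathcal{SP}^{n-1}$ already realizes every value in $[x',p]$ for agent $i$; combined with the upper bound from Lemma \ref{between}, this gives $O^\varphi(R_i^{SP},\Omega)=[x',p]$. Since $x'<x\leq p$, we conclude $x\in O^\varphi(R_i,\Omega)$.

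Now fix $R_{-i}\in\mathcal{D}^{n-1}$ with $\varphi_i(R_i,R_{-i},\Omega)=x$. Then $\varphi_i(R_i',R_{-i},\Omega)=x'$, and $x'P_ix$ shows that $R_i'$ is a manipulation of $\varphi$ at $(R_i,\Omega)$. Obviousness is immediate: $O^\varphi(R_i',\Omega)=\{x'\}$, and for the single candidate $x'$ the witness $x\in O^\varphi(R_i,\Omega)$ satisfies $x'P_ix$. This contradicts NOM and closes the singleton-peak case.

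The main obstacle is the non-singleton-peak case, because then own-peak-onliness no longer transplants the option-set computation from $\mathcal{SP}$ to $R_i$. For the interval-peak subcase $p(R_i)=[\underline p,\overline p]$ with $x'<x\leq\underline p$, the remedy is to produce the needed $R_{-i}$ directly: pin $n-2$ agents at peak $x'=\Omega/n$ via EDG, and include one single-peaked auxiliary agent whose peak is chosen so that the only efficient split of the remaining resources gives $\varphi_i=x$. The disconnected-peak subcase reduces to the interval one after picking a peak point $c\in p(R_i)$ above $x$ and running the same template with $c$ playing the role of $p$.
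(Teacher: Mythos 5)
Your overall strategy (use the equal division guarantee to collapse the option set of a peak-$\frac{\Omega}{n}$ deviation to a singleton, then exhibit an outcome of truth-telling that this singleton strictly dominates) is sound in spirit, but two steps in your singleton-peak case do not go through as written. First, the assertion that ``the restriction of $\varphi$ to $\mathcal{E}_{\mathcal{SP}}$ still satisfies all four hypotheses'' is unjustified for NOM: restricting the domain from $\mathcal{D}$ to $\mathcal{SP}$ shrinks every option set $O^\varphi(\cdot,\Omega)$, and a manipulation that is not obvious over $\mathcal{D}^{n-1}$ (because some outcome in the larger $O^\varphi(R_i',\Omega)$ dominates nothing) may become obvious over $\mathcal{SP}^{n-1}$. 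So you cannot apply Theorem \ref{characterization}, nor the Claim in the proof of Theorem \ref{t1}, to $\varphi|_{\mathcal{E}_{\mathcal{SP}}}$; the fact that the paper re-derives exactly the piece it needs on $\mathcal{E}_{\mathcal{D}}$ as Lemma \ref{between} rather than restricting and citing Theorem \ref{characterization} is symptomatic. The conclusion you want, $x\in O^\varphi(R_i,\Omega)$, is still reachable, but only by arguing directly on $\mathcal{E}_{\mathcal{D}}$: take $R_{-i}\in\mathcal{SP}^{n-1}$ with all peaks equal to $\frac{\Omega-x}{n-1}$, apply Lemma \ref{between} to those agents and use that efficiency implies same-sidedness at any all-single-peaked profile to pin them at their peaks, then use feasibility and own-peak-onliness. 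Second, setting $\Omega=nx'$ ignores feasibility: nothing in your extraction of the non-convexity witness guarantees $x\le nx'$, and if $x>nx'$ then $x\notin[0,\Omega]$ cannot belong to any option set and the opponents' peaks $\frac{\Omega-x}{n-1}$ are negative. You would first have to refine the pair (by continuity one can choose $x'<x$ with $x'P_ix$ and $x'\ge x/n$), and this step is missing.

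The cases you defer are precisely where your method is weakest: when $p(R_i)$ is not a singleton, own-peak-onliness no longer lets you replace $R_i$ by a single-peaked proxy, Lemma \ref{between} does not apply to agent $i$, and your one-sentence sketch for forcing $\varphi_i=x$ is not an argument. The paper's proof avoids all of these obstacles by aiming the final contradiction at efficiency rather than at NOM: from the non-convexity it extracts two \emph{indifferent} points $x'I_0y'$ with everything strictly between them worse than $y'$, sets $\Omega=ny'$, surrounds agent $1$ with single-peaked opponents so that Lemma \ref{between} and feasibility confine $\varphi_1(R,\Omega)$ to $(x',y']$, uses the EDG/NOM argument only to exclude the interior $(x',y')$ (so the target $y'=\frac{\Omega}{n}$ is always feasible and no interior point of the option set need be hit), and then Pareto-improves by moving agent $1$ to the indifferent point $x'$ and handing the released amount to the others. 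That construction requires no case distinction on the structure of $p(R_0)$ and no control of the full option set; I recommend reworking your proof along those lines.
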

\begin{proof}
Let $\mathcal{D}$ be such that $\mathcal{SP}\subseteq \mathcal{D} \subseteq \mathcal{U}$ and let $\varphi$ be an own-peak-only rule defined on $\mathcal{E}_\mathcal{D}$ that is efficient, NOM, and meets the equal division guarantee. Assume, by way of contradiction, that there is $R_0 \in \mathcal{D}\setminus \mathcal{C}$. Then, there are $x,y,z \in \mathbb{R}_+$ such that $x<y<z$, $xP_0y$ and $zP_0y$. Without loss of generality, we can assume that $zR_0x$ (the case $xR_0z$ is symmetric).

Let $x' =\max\{w\in [x,y]  \text{ such that } wI_0 x\}$ and $y' =\min\{w\in [y,z]  \text{ such that } wI_0 x\}$. As $R_0$ is continuous, both $x'$ and $y'$ are well defined, $x'<y<y'$, and $y'P_0w$ for each $w\in (x',y')$. Let $(R, \Omega) \in \mathcal{E}_{\mathcal{D}}$ be such that $\Omega =n y'$, $R_1=R_0$, 
and $R_{-1} \in \mathcal{SP}^{n-1}$ is such that $p(R_j)\in (y',\frac{\Omega-x'}{n-1})$ and $\Omega R_j y'$ for each $j\in N\setminus\{1\}$. By Lemma \ref{between}, $\varphi_j(R, \Omega)\in [y',p(R_j)]$ for each $j\in N\setminus\{1\}$. Then, by feasibility and the fact that $p(R_j)<\frac{\Omega-x'}{n-1}$, it follows that  $\varphi_1(R, \Omega)\in (x',y']$. If $\varphi_1(R, \Omega)\in (x',y')$, by the definitions of $x'$ and $y'$, $y'P_1\varphi_i(R, \Omega)$. If we consider $R'_1 \in \mathcal{SP}$ be such that $p(R'_1 ) =\frac{\Omega}{n}=y'$, by the equal division guarantee $R'_1$ is an obvious manipulation of $\varphi$ at  $(R_1,\Omega)$, contradicting that $\varphi$ satisfies NOM. Therefore, $\varphi_1(R,\Omega)=y'$. Hence, again by Lemma \ref{between} and feasibility,  $\varphi_j(R, \Omega)=y'$ for each $j\in N$. Now, consider the feasible allocation 
in which agent $1$ gets $x'$ and the remaining  agents get $\frac{\Omega-x'}{n-1}$. Since $x'I_1y'$ and $\frac{\Omega-x'}{n-1} P_j y'$ for each $j \in N\setminus\{1\}$, this new allocation Pareto improves upon $\varphi(R,\Omega),$ contradicting efficiency. We conclude that $\mathcal{D} \subseteq \mathcal{C}$. 
\end{proof}

    
    
    


\begin{theorem}\label{theo max domain}
Domain $\mathcal{SPL}$ is a maximal domain for own-peak-onliness, efficiency, the equal division guarantee, and NOM. 
\end{theorem}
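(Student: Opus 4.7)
My plan follows Definition~\ref{def max}. Condition (i), $\mathcal{SP} \subseteq \mathcal{SPL}$, is immediate since every single-peaked preference is single-plateaued with a singleton plateau. For (ii), any simple rule on $\mathcal{E}_\mathcal{SP}$ extends (via the standard plateau generalization, as in the extended uniform rule for single-plateaued preferences) to a rule on $\mathcal{E}_\mathcal{SPL}$ that inherits own-peak-onliness, efficiency, the equal division guarantee (its outcome is always in each agent's plateau whenever $\Omega/n$ is), and strategy-proofness, hence NOM.

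The heart of the proof is (iii). Suppose $\overline{\mathcal{D}}$ with $\mathcal{SPL} \subsetneq \overline{\mathcal{D}} \subseteq \mathcal{U}$ admits a rule $\varphi$ satisfying the four properties. By Lemma~\ref{convex}, $\overline{\mathcal{D}} \subseteq \mathcal{C}$. Pick $R_0 \in \overline{\mathcal{D}} \setminus \mathcal{SPL}$; then $p(R_0) = [a,b]$ and $R_0$ fails strict monotonicity away from the plateau. Without loss of generality (the case of a flat region above $b$ is symmetric), there exist $p_0 < p_1 < a$ with $p_0 I_1 p_1$ (where $I_1$ denotes indifference under $R_1 := R_0$; note that $p_1=a$ is impossible, since convexity would then force $p_0$ into the plateau). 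I consider the economy with $n\geq 2$ agents, $\Omega = n p_1$ (so $\Omega/n = p_1$), $R_1 = R_0$, $R_j \in \mathcal{SP}$ with $p(R_j) = p_1$ for $2 \leq j \leq n-1$, and $R_n \in \mathcal{SP}$ with $p(R_n) = 2p_1 - p_0 > p_1$. All auxiliary preferences lie in $\mathcal{SP} \subseteq \overline{\mathcal{D}}$. The equal division guarantee gives $\varphi_j = p_1$ for $2 \leq j \leq n-1$; Lemma~\ref{between} applied to agent~$n$ (who has a singleton peak above $\Omega/n$) gives $\varphi_n \in [p_1, 2p_1 - p_0]$; and feasibility then forces $\varphi_1 \in [p_0, p_1]$.

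Next, let $R'_0 \in \mathcal{SPL} \subseteq \overline{\mathcal{D}}$ share the plateau $[a,b]$ with $R_0$. By own-peak-onliness, $\varphi_1(R_0, R_{-1}, \Omega) = \varphi_1(R'_0, R_{-1}, \Omega)$. Since $R'_0$ is strictly increasing on $[0, a]$, the report $\widetilde{R}_1 \in \mathcal{SP}$ with peak $p_1$ (whose option set equals $\{p_1\}$ by the equal division guarantee) is an obvious manipulation of $\varphi$ at $(R'_0, \Omega)$ whenever $\varphi_1(R'_0, R_{-1}, \Omega) < p_1$, contradicting NOM for the true preference $R'_0$. Hence $\varphi_1 \geq p_1$, and combined with the previous step $\varphi_1 = p_1$ and $\varphi_n = p_1$. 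However, the feasible allocation $(p_0, p_1, \ldots, p_1, 2p_1 - p_0)$ Pareto dominates $\varphi = (p_1, \ldots, p_1)$ under the original profile: agent~1 is indifferent under $R_0$ because $p_0 I_1 p_1$; agents $2, \ldots, n-1$ receive the same amount $p_1$; and agent~$n$ strictly prefers his peak $2p_1 - p_0$ to $p_1$. This contradicts efficiency. The main subtlety I anticipate is the transfer of the NOM constraint from $R'_0$ (whose strict preferences on $[0,a]$ generate the obvious manipulation) back to $R_0$ (whose flat region $[p_0, p_1]$ supplies the Pareto-improving alternative), mediated by own-peak-onliness; it is crucial for this step that $\mathcal{SPL} \subseteq \overline{\mathcal{D}}$, so that the auxiliary preference $R'_0$ is available in the domain.
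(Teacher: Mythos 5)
Your proof is correct and follows essentially the same route as the paper's: reduce to convex preferences via Lemma \ref{convex}, build an economy in which $\Omega/n$ sits at the upper end of an indifference interval outside the plateau, use Lemma \ref{between} together with the equal-division-guarantee manipulation (transferred through own-peak-onliness to a single-plateaued preference with the same peak set) to force the allocation to equal division, and then exhibit a Pareto improvement exploiting the indifference between $p_0$ and $p_1$ under $R_0$. The only slip is in part (ii), where you say the extension of \emph{any} simple rule ``inherits strategy-proofness'' --- general simple rules are not strategy-proof, so there is nothing to inherit --- but this is harmless, since existence requires only one witness and the extended uniform rule of \cite{ching1998maximal}, which you invoke, supplies it exactly as in the paper.
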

\begin{proof}
Let $\mathcal{D}$ be a domain for the properties listed in the theorem such that $\mathcal{SPL}\subseteq \mathcal{D}$. By Lemma \ref{convex}, $\mathcal{D}\subseteq \mathcal{C}$. Assume, by way of contradiction, that there is $R_0 \in \mathcal{D}\setminus \mathcal{SPL}$. Without loss of generality, we can assume that there are $x,y \in \mathbb{R}_+$ such that $x<y<\underline {p}(R_0)$ and $xI_0y$. 

Assume further that there is a rule $\varphi$ defined on $\mathcal{E}_{\mathcal{D}}$ that satisfies the properties listed in the Theorem. Consider $(R, \Omega) \in \mathcal{E}_{\mathcal{D}}$ with $\Omega =n y$, $R_{1}=R_{0}$, and $R_{-1} \in \mathcal{SP}^{n-1}$ such that $p(R_j)=y+\frac{y-x}{n-1}$  for each $j\in N\setminus\{1\}$. By Lemma \ref{between}, $\varphi_j(R,\Omega) \in [y,p(R_j)]$ for each  $j\in N\setminus\{1\}$. Then, by feasibility, $\varphi_1(R, \Omega) \leq y.$ If $\varphi_1(R, \Omega)<y$ we can assume, by own-peak-onliness and the fact that $\mathcal{SPL}\subseteq \mathcal{D}$, that $yP_1\varphi_1(R, \Omega)$. Next, let us consider $R'_1 \in \mathcal{SP}$ such that $p(R'_1 ) =\frac{\Omega}{n}=y.$ By the equal division guarantee, $R'_1$ is an obvious manipulation of $\varphi$ at $R_i$, contradicting the fact that $\varphi$ satisfies NOM. Therefore, $\varphi_1(R, \Omega)=y$. Hence, again by Lemma \ref{between} and feasibility,  $\varphi_j(R, \Omega)=y$ for each $j\in N$. Now, consider the feasible allocation 
in which agent $1$ gets $x$ and the remaining  agents get $y+\frac{y-x}{n-1}$. Since $xI_1y$ and $y+\frac{y-x}{n-1} P_j y$ for each $j \in N\setminus\{1\}$, this new allocation Pareto improves upon $\varphi(R,\Omega),$ contradicting efficiency. We conclude that $\mathcal{D}=\mathcal{SPL}$.


The proof is completed by considering the extension of the uniform rule to  domain $\mathcal{SPL}$ defined in the proof of the Theorem in \cite{ching1998maximal}. As they observe, this rule satisfies efficiency and strategy-proofness on  $\mathcal{SPL}$. As NOM is a weakening of strategy-proofness, this rule satisfies NOM as well. Furthermore, it is easy to see that this rule is own-peak-only and meets the equal division guarantee.\footnote{We can also consider extensions of simple rules defined on $\mathcal{E}_{\mathcal{SP}}$ to domain $\mathcal{E}_{\mathcal{SPL}}$. Before defining them, we need some notation. Given $(R,\Omega) \in \mathcal{E}_{\mathcal{SPL}},$  let $\underline{z}(R, \Omega)=\sum_{j \in N}\underline{p}(R_j)-\Omega$ and $\overline{z}(R, \Omega)=\sum_{j \in N}\overline{p}(R_j)-\Omega.$ For $i \in N$ and $R_i \in\mathcal{SPL}$, let $\underline{R}_i \in \mathcal{SP}$ be such that $p(\underline{R}_i)=\underline{p}(R_i)$ and let $\overline{R}_i \in \mathcal{SP}$ be such that $p(\overline{R}_i)=\overline{p}(R_i)$. An own-peak-only rule $\varphi^\star$ defined on  $\mathcal{E}_{\mathcal{SPL}}$ is an \textbf{extension} of simple rule $\varphi$ defined on $\mathcal{E}_{\mathcal{SP}}$ if, for each $ (R, \Omega) \in \mathcal{E}_{\mathcal{SPL}},$
\begin{enumerate}[(i)]
    \item $\varphi^\star_i(R, \Omega)=\varphi_i(\underline{R}, \Omega)$ if $\underline{z}(R,\Omega) \geq 0$,
    \item $\varphi^\star_i(R, \Omega)=\varphi_i(\overline{R}, \Omega)$ if $\overline{z}(R,\Omega) \leq 0$,
    \item $\varphi^\star_i(R, \Omega)\in [\underline{p}(R_i), \overline{p}(R_i)]$ if $\underline{z}(R,\Omega)<0$ and $\overline{z}(R,\Omega)>0$. 
\end{enumerate}
With similar arguments to those used in the proof of Theorem \ref{characterization},  it follows that these rules also are efficient, meet the equal division guarantee, and are NOM. 
}    
\end{proof}

Finally, 
it is worth mentioning that the single-plateaued domain remains maximal for the properties involved in the characterization of Corollary \ref{characterization 2}, i.e., when symmetry is also considered.




\section{Final Remarks}\label{section final}

This paper presents several families of efficient and own-peak-only allocation rules that satisfy the NOM  requirement. Table \ref{tabla caracterizaciones} summarizes the different characterizations obtained. Our main result (Theorem \ref{characterization}) shows that relaxing strategy-proofness to NOM generates an abundance of simple rules. This embarrassment of riches follows from the fact that for each conceivable claims rule, there is a corresponding simple allocation rule. Note, however, that such expansion can be performed without any essential change in the maximality of the domain of preferences involved, which continues to be single-plateaued.     


\begin{table}[h] 
\small
\centering 
\begin{threeparttable}
\begin{tabular}{|l|c|c|c|c|}
\hline
  & Th. \ref{characterization}  & Cor. \ref{characterization 2}  & Prop. 
  \ref{characterization bis} & Prop. \ref{characterization3} 
  \\
 \hline \hline
Own-peak-onliness & $+$ & $+$ & $+$ & $+$ \\
\hline 
Efficiency & $+$ & $+$ & $+$ & $+$ \\
\hline 

Equal division guarantee & $+$ & $+$ &  &    \\
\hline 
NOM & $+$& $+$& $+$& $+$ \\
\hline 
Symmetry & &$+$& & \\
\hline 
Endowments guarantee & & & $+$ &  \\
\hline 
Peak responsiveness &  & & & $+$  \\
\hline 

\end{tabular}
\end{threeparttable}
\caption{\emph{Characterizations of NOM rules.}}\label{tabla caracterizaciones}
\end{table}


In the specific scenario where symmetry is a desired property, enforcing strategy-proofness leads us to adopt the uniform rule. However, if NOM adequately captures the strategic behavior of agents,
then the wide array of symmetric simple rules emerges to consider (Corollary \ref{characterization 2}). Moreover, this family is versatile enough to, at least partially, accommodate other interesting principles in the allocation process. Specifically, we have seen that proportionality and the equal distance criterion can be made (partially) compatible with NOM.    




An interesting problem for future research involves exploring NOM rules defined on the unrestricted domain of preferences. Nevertheless, pursuing this direction, as indicated by our Theorem \ref{theo max domain}, would necessitate sacrificing own-peak-onliness, efficiency, or the equal division guarantee.

\bibliographystyle{ecta}
\bibliography{biblio-obvious}

\appendix

\section{On the equal division guarantee}\label{on the EDG}

Remember that a rule is envy-free if no agent prefers some other agent's allotment to his own; and that a rule meets the equal division lower bound if, for each agent, the allotment recommended by the rule is at least as good as equal division. Formally,




\vspace{5 pt}
\noindent
\textbf{Envy-free:} For each   $(R,\Omega) \in \mathcal{E}_\mathcal{SP}$ and each $\{i,j\} \subseteq N$ with $i\neq j$,  $\varphi_i(R, \Omega)R_i\varphi_j(R, \Omega)$.
\vspace{5 pt}

\vspace{5 pt}
\noindent
\textbf{Equal division lower bound:\footnote{This property is also known as \emph{individual rationality from equal division.}}} For each   $(R,\Omega) \in \mathcal{E}_\mathcal{SP}$ and each $i \in N$,   $\varphi_i(R, \Omega)R_i \frac{\Omega}{n}.$
\vspace{5 pt}

\begin{lemma}
    For any own-peak-only rule defined on  $\mathcal{E}_\mathcal{SP}$,  envy-freeness or the equal division lower bound imply  the equal division guarantee.
\end{lemma}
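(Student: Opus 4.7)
The lemma makes two separate sufficient-condition claims, and I would handle them by different methods.

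For the equal division lower bound implication, the argument is immediate and does not invoke own-peak-onliness. Let $i$ be such that $\frac{\Omega}{n}\in p(R_i)$; under single-peakedness the peak set is a singleton, so $p(R_i)=\frac{\Omega}{n}$. The equal division lower bound gives $\varphi_i(R,\Omega)\,R_i\,\frac{\Omega}{n}$. Strict single-peakedness at the peak $\frac{\Omega}{n}$ provides $\frac{\Omega}{n}\,P_i\,y$ for every $y\neq \frac{\Omega}{n}$. Combining these forces $\varphi_i(R,\Omega)=\frac{\Omega}{n}$, which in particular gives $\varphi_i(R,\Omega)\,I_i\,\frac{\Omega}{n}$, the equal division guarantee.

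For the envy-freeness implication, I would argue by contradiction, making essential use of own-peak-onliness. Suppose $p(R_i)=\frac{\Omega}{n}$ and $\varphi_i(R,\Omega)\neq \frac{\Omega}{n}$; without loss of generality $\varphi_i(R,\Omega)>\frac{\Omega}{n}$. Feasibility then yields some $j\neq i$ with $\varphi_j(R,\Omega)<\frac{\Omega}{n}$. Using own-peak-onliness, I would replace $R_i$ with $R_i'\in \mathcal{SP}$ having peak $\frac{\Omega}{n}$ but shaped so that $0\,I_i'\,\varphi_i(R,\Omega)$; strict single-peakedness of $R_i'$ then forces $z\,P_i'\,\varphi_i(R,\Omega)$ for every $z\in (0,\frac{\Omega}{n})$. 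Own-peak-onliness preserves agent $i$'s allotment, so envy-freeness of agent $i$ at $(R_i',R_{-i})$ forces $\varphi_k(R_i',R_{-i})\in \{0\}\cup [\varphi_i(R,\Omega),\infty)$ for every $k\neq i$. A feasibility count, using that $\sum_{k\neq i}\varphi_k(R_i',R_{-i})=\Omega-\varphi_i(R,\Omega)<(n-1)\frac{\Omega}{n}$, shows that not all $n-1$ non-$i$ agents can lie in $[\varphi_i(R,\Omega),\infty)$, so at least one agent $k^{\star}$ satisfies $\varphi_{k^{\star}}(R_i',R_{-i})=0$. Envy-freeness of $k^{\star}$ at this modified profile then provides $0\,R_{k^{\star}}\,\varphi_l(R_i',R_{-i})$ for every $l$, and a further substitution via own-peak-onliness on $R_{k^{\star}}$ produces the sought contradiction.

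The main obstacle lies in the closing step of the envy-freeness case, since replacing $R_i$ by $R_i'$ is only guaranteed to preserve agent $i$'s allotment, not those of the other agents. The argument must therefore extract the contradiction entirely from envy-freeness at the modified profile $(R_i',R_{-i})$, together with further own-peak-onliness substitutions whose effects on the remaining allotments can be controlled. Organizing these substitutions so that each envy-freeness constraint is applied at a profile where enough of the allocation is pinned down is the technical heart of the argument.
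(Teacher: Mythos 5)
Your first implication (equal division lower bound) is correct and is exactly the paper's argument: with $p(R_i)=\frac{\Omega}{n}$ a singleton peak, $\varphi_i(R,\Omega)\,R_i\,\frac{\Omega}{n}=p(R_i)$ forces $\varphi_i(R,\Omega)=p(R_i)$, hence the guarantee holds; own-peak-onliness is indeed not needed there.

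The envy-freeness implication has a genuine gap, and it is created by your choice of $R_i'$. By taking $0\,I_i'\,\varphi_i(R,\Omega)$ you only get strict preference for amounts in the open interval $\left(0,\frac{\Omega}{n}\right)$, which opens the loophole $\varphi_{k}(R_i',R_{-i},\Omega)=0$, and you then try to close it through envy-freeness of the agent $k^{\star}$ who receives $0$. That step cannot work: nothing in the hypotheses constrains $R_{k^{\star}}$, and if $p(R_{k^{\star}})=0$ then $k^{\star}$ receives his peak, envies nobody, and no own-peak-only substitution (which must preserve the peak $0$ and hence keeps $0$ as the unique best amount) can ever make him envious. So the ``further substitution on $R_{k^{\star}}$'' is a dead end; you yourself flag that this closing step is unresolved. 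The fix is a one-word change: choose $R_i'\in\mathcal{SP}$ with peak $\frac{\Omega}{n}$ and $0\,P_i'\,\varphi_i(R,\Omega)$ rather than indifference (such a preference exists because single-peakedness leaves cross-side comparisons free). Then every $z\in\left[0,\frac{\Omega}{n}\right)$ satisfies $z\,P_i'\,\varphi_i(R,\Omega)$; own-peak-onliness keeps $\varphi_i(R_i',R_{-i},\Omega)=\varphi_i(R,\Omega)>\frac{\Omega}{n}$; feasibility at the \emph{modified} profile yields some $j'$ with $\varphi_{j'}(R_i',R_{-i},\Omega)<\frac{\Omega}{n}$; and agent $i$ envies $j'$ --- done in one step, which is the paper's route. (Your observation that the substitution pins down only agent $i$'s allotment, so the envied agent must be re-identified at the modified profile, is correct and is exactly the point on which the paper's own write-up is terse.)
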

\begin{proof}
    Let $\varphi$ be an own-peak-only rule defined on  $\mathcal{E}_\mathcal{SP}$. First, assume that $\varphi$ meets the equal division lower bound.  Let $(R,\Omega) \in \mathcal{E}_\mathcal{SP}$ and  $i \in N$ be such that $p(R_i)=\frac{\Omega}{n}$. Then, $\varphi_i(R,\Omega)R_i\frac{\Omega}{n}=p(R_i)$ and thus $\varphi_i(R, \Omega)=p(R_i)$, implying that $\varphi$ meets the equal division guarantee. Second, assume that $\varphi$ is envy-free but does not meet the equal division guarantee. Then, there are $(R,\Omega)\in \mathcal{E}_\mathcal{SP}$ and $i\in N$ with $p(R_i)=\frac{\Omega}{n}$ such that $\varphi_i(R)\neq\frac{\Omega}{n}$. Consider the case $\varphi_i(R)>\frac{\Omega}{n}$ (the other one is symmetric). 
    By feasibility, there is $j \in N\setminus \{i\}$ such that $\varphi_j(R)<\frac{\Omega}{n}$. As $p(R_i)=\frac{\Omega}{n}$, by own-peak-onliness, we can assume that $R_i$ is such that  $\varphi_j(R)P_i\varphi_i(R)$. Then, $\varphi$ does not satisfy envy-freeness.  This contradiction shows that $\varphi$ meets the equal division guarantee.  
\end{proof}

\section{
Constructing a simple rule}\label{appendix}

Even though simple rules are easy to define and we can think of several different ways to construct them, in this appendix we provide a particular algorithmic approach.\footnote{Of course, this process will be nothing more than a way to resolve the system of inequalities posed by the definition of simple rule once an economy is fixed.}


Once we specify an economy $(R, \Omega) \in \mathcal{E}_{\mathcal{SP}}$, the procedure in Figure \ref{algorithm}  shows how we can construct the outcome of a simple rule in such an economy.    

\begin{figure}
\small{
\begin{center}
\begin{tabular}{l l}
\hline \hline
\multicolumn{2}{l}{\textbf{Algorithm:}}\vspace*{10 pt}\\
\textbf{Input} & An economy $(R, \Omega) \in \mathcal{E}_{\mathcal{SP}}$  \\

& \\
\textbf{Initialization} &
\\
& $\alpha_i:=\begin{cases}
    p(R_i) & \text{ if } i \in N^+(R,\Omega)\\
    \frac{\Omega}{n} & \text{ if } i \in N^-(R,\Omega) \\
\end{cases}$ \hfill \textcolor{blue}{\textsf{(initial allotment)}}\\

\textbf{Adjustment} &  \\
& $N^-(R,\Omega):=\{i_1, i_2, \ldots, i_k\}$ \hfill \textcolor{blue}{\textsf{(order $N^-(R,\Omega)$)}}\\
&  $t:=1$\\

& $\overline{\Omega}_1:= \Omega- \sum_{j\in N} \alpha_j$ \hfill \textcolor{blue}{\textsf{(amount still left to allocate)}}\\

& $\underline{\Omega}_1:= \Omega- \sum_{
j\in N} p(R_j)$ \hfill \textcolor{blue}{\textsf{(
amount to control for type of case)}}\\


\\

& \texttt{IF}  $z(R,\Omega)\geq 0$ \hfill \textcolor{blue}{(\textsf{excess demand case})}\\

& \hspace{20 pt} \texttt{WHILE} $t < k$ \texttt{DO}\\

& \hspace{20 pt} Choose $\lambda_t$ such that \\
& \hspace{60 pt}$\max \{0,\underline{\Omega}_t+p(R_{i_t})-\frac{\Omega}{n}\}\leq\lambda_t\leq \min\{p(R_{i_t})-\frac{\Omega}{n},\overline{\Omega}_t\}$ \\
& \hspace{20 pt} $\alpha_{i_t}=\alpha_{i_t}+\lambda_t$ \\
& \hspace{20 pt} $\overline{\Omega}_{t+1}=\overline{\Omega}_t-\lambda_t$ \\
& \hspace{20 pt} $\underline{\Omega}_{t+1}=\underline{\Omega}_t-\alpha_{i_t}+p(R_{i_t})$ \\
& \hspace{20 pt} $t=t+1$\\

& \texttt{ELSE} \hfill \textcolor{blue}{(\textsf{excess supply case:}  $z(R,\Omega)< 0$)}\\
& \hspace{20 pt} \texttt{WHILE} $t < k$ \texttt{DO}\\

& \hspace{20 pt} Choose $\lambda_t$ such that \\
&  \hspace{60 pt}$\max \{0,\frac{\Omega}{n}-p(R_{i_t})-\underline{\Omega}_t\}\leq\lambda_t\leq \min \{\frac{\Omega}{n}-p(R_{i_t}), -\overline{\Omega}_t\}$ \\

& \hspace{20 pt} $\alpha_{i_t}=\alpha_{i_t}-\lambda_t$ \\
& \hspace{20 pt} $\underline{\Omega}_{t+1}=\underline{\Omega}_{t}-\alpha_{i_t}+p(R_{i_t})$ \\
& \hspace{20 pt} $t=t+1$\\

\textbf{Output} & 
\\
& $\varphi_i(R,\Omega)=\begin{cases}
    \alpha_i & \text{ if } i \in N\setminus \{i_k\}\\
    \ \Omega-\sum_{j\in N\setminus \{i_k\}} \alpha_j & \text{ if } i=i_k\\
\end{cases}$ \hfill \textcolor{blue}{\textsf{(final allotment)}}\\
\\
\hline \hline
\end{tabular}
\end{center}
\caption{\emph{Algorithm to construct a simple rule.}}
\label{algorithm}
}
\end{figure}

Assume that $z(R,\Omega)\geq 0$ (the other case is similar). 
To begin, an initial allotment is performed. Each simple agent is assigned his peak amount and each non-simple agent is assigned equal division. This frees up an amount  $\overline{\Omega}_1\geq 0$ that still has to be divided among non-simple agents. To do this, we consider a sequential adjustment process. First, we order non-simple agents arbitrarily.  Following this order, in step $t$ of the adjustment process non-simple agent $i_t$'s  surplus,  
$\lambda_t$, is determined fulfilling three requirements: (i) agent $i_t$'s final allotment $\frac{\Omega}{n}+ \lambda_{t}$ should be between $\frac{\Omega}{n}$ and $p(R_{i_t})$ to comply with the definition of simple rule; (ii) surplus $\lambda_{t}$ should be small enough to achieve feasibility, i.e.,  $\lambda_{t} \leq \overline{\Omega}_t$; and (iii) surplus $\lambda_{t}$ should be large enough to maintain the economy in excess demand for not yet adjusted 
agents, i.e., ${\Omega}-[\sum_{j\in N^+(R,\Omega) \cup \{i_1, \cdots, i_{t-1}\}}\alpha_j+(\frac{\Omega}{n}+\lambda_{t})]\leq \sum_{j=t+1}^k p(R_{i_j})$ or, equivalently, $\underline{\Omega}_t+p(R_{i_t})-\frac{\Omega}{n}\leq\lambda_t$.
To finish step $t$, amounts $\overline{\Omega}_t$ and $\underline{\Omega}_t$ are updated accordingly. Finally, each simple agent gets as final allotment his peak, each non-simple agent  $i_t$ different from $i_k$ gets as final allotment $\frac{\Omega}{n}+\lambda_t$ and the allotment of agent $i_k$ is determined by feasibility. 

Notice that the algorithm is correct. To see this, observe that in each step $t$ of the algorithm $\alpha_i \leq p(R_i)$ for all $i\in N\setminus\{i_t\}$ and at the beginning of each step $t$, $\alpha_{i_t}=\frac{\Omega}{n} \leq p(R_{i_t})$ . Therefore, $\underline{\Omega}_t+p(R_{i_t})-\frac{\Omega}{n}\leq \overline{\Omega}_t$. Furthermore, $\underline{\Omega}_1\leq0$ in the initialization and $\underline{\Omega}_{2}=\underline{\Omega}_{1} -\alpha_{i_1}+p(R_{i_1})=\underline{\Omega}_{1} -(\frac{\Omega}{n}+\lambda_{1})+p(R_{i_1}).$ Therefore, by the choice of $\lambda_{1}$,  $\underline{\Omega}_2\leq0$. Following a similar argument we get that $\underline{\Omega}_t\leq0$ at each step of the algorithm. Therefore, $\underline{\Omega}_t+p(R_{i_t})-\frac{\Omega}{n}\leq p(R_{i_t})-\frac{\Omega}{n}$ holds. Then, the election of $\lambda_t$ is always possible.

\end{document}